\newtheorem{theorem}{Theorem}
\newtheorem{lemma}{Lemma}
\begin{document}
\title{Efficient utilization of imaginarity in quantum steering}  
\author{Shounak Datta}
\email{shounaknew@gmail.com}
\affiliation{S. N. Bose National Centre for Basic Sciences, Block JD, Sector III, Salt Lake, Kolkata 700 106, India}
\author{A. S. Majumdar}
\email{archan@bose.res.in}
\affiliation{S. N. Bose National Centre for Basic Sciences, Block JD, Sector III, Salt Lake, Kolkata 700 106, India}
\begin{abstract}
We illustrate the role of complex numbers in quantum information processing through the phenomenon of quantum steering. Exploiting partial knowledge of a qubit in terms of imaginarity, we formulate a steering criterion for bipartite qubit systems, which requires two dichotomic measurements at the untrusted side and two mutually unbiased bases at the trusted side. We show that quantum correlations embodied through our proposed imaginarity steering inequality can be witnessed by suitably constructed Hermitian operators that depend on fewer state parameters and require measurement of a lesser number of observables than for witnessing other forms of quantum nonlocal correlations. The monogamy of such correlations is also demonstrated.  We further underscore the steering of imaginarity as an efficient nonlocal resource in the presence of white noise and under unhsarp measurements, showing the  robustness of imaginarity steering compared to certain other steering criteria.
\end{abstract}
\pacs{03.65.Ud, 03.67.Mn, 03.65.Ta}

\maketitle 

\section{Introduction}\label{C1}

Real numbers are crucial for explaining most of the features in our classical world. Further, real vector space quantum theory has been successfully applied to investigate entanglement between real qubits(rebits)~\citep{Batle,Wootters1,Wootters2,Prasannan} and applied to some information processing tasks~\citep{McKague,Koh}. However, quantum mechanics describes the state of a quantum system by wave functions having the form $\Psi = a e^{i \Phi}$ where complex numbers play a key role in many aspects. If a physical theory is constructed with the rules of quantum mechanics although restricted to real parameters in terms of states and operations~\citep{Araki,Hardy,Baez,Aleksa,Wootters},  it fails to explain certain feasible properties~\citep{Liu,Zhu,Zhang}.  

The importance of imaginary numbers in quantum theory was first proposed by Hickey et al~\citep{Hickey} from the perspective of general quantum resource theory. The operational resource theory of imaginarity has been put forward by Wu et al~\citep{Wu}, where the importance of imaginarity in experimental scenario and in solving the local state discrimination problem has been discussed. Imaginarity can be quantified and applied to state conversion problems~\citep{Wu1}, and the measures, such as, geometric imaginarity and robustness of imaginarity turn out to be useful in linear optical set-ups. Resource theory of imaginarity has been extended in the distributed scenario~\citep{Wu2}. Based on the resource theory, it has been shown that real operations can broadcast imaginarity  like other quantum mechanical resources~\citep{Zhang2024}. Imaginarity is also useful for distinguishing real quantum states which can be used to hide or mask quantum information~\citep{Zhu2021}. Further, real and complex quantum theories can be discriminated~\citep{Batle2025} by employing imaginarity in nonlocality experiments to show the supremacy of complex theories over the real world-view.

Imaginarity is manifested through density matrix representation of a quantum system in  particular basis. Since mutually orthogonal bases are critical in the development of quantum nonlocal theory,  it is interesting to investigate the role of imaginarity in quantum nonlocality. Historically, the Einstein-Podolsky-Rosen(EPR) paradox~\citep{EPR} and Bell's inequality~\citep{Bell,CHSH}, first revealed the nonlocal nature of correlations in quantum mechanics for entangled pairs. Schr\"{o}dinger's interpretation~\citep{Sch} introduced a new kind of nonlocal phenomenon, called quantum steering. There exists a necessary and sufficient condition for steering in 2-2-2 scenario~\citep{CFFW}. Steerability is an intermediate form of quantum correlation lying between Bell nonlocality and entanglement~\citep{Wiseman,Jones}. Quantum steering has emerged as a powerful resource in many different information processing tasks such as 
semi-device-independent self-testing of quantum states and devices~\citep{Supic2020,Bian2020}.

The formulation of steering through inequalities is based on the uncertainty between mutually orthogonal basis in different forms~\citep{Uola}. The expression of  steering inequalities~\citep{Reid,Schneeloch,Pramanik,Maity} through various uncertainty relations, such as Heisenberg uncertainty relation~\citep{Heisenberg}, entropic uncertainty relation~\citep{Massen}, fine-grained uncertainty relation~\citep{Oppenheim}, and sum uncertainty relation~\citep{Pati}, requires complete description of the local density matrix. On the other hand, there exist  certain quantum mechanical correlations  based on complementarity in mutually orthogonal basis, which can be revealed using partial knowledge of the density matrix~\citep{Mondal,Wei2024}. The motivation for the present work is to extend such an approach by utilizing the resource of imaginarity for formulating an efficient criterion for quantum steering, that does not require full density matrix information, and is hence, easier to implement in practice.  

To this end, here we derive an imaginarity steering inequality based on partial information of the density matrix, and propose the following protocol for quantum steering, which can be easily realized with minimal experimental resources.  Consider that, Alice prepares a bipartite two-qubit state and shares the second particle with Bob. Alice performs a measurement on her particle in order to convince Bob that the joint state is entangled. Next, Bob determines the imaginarity of  his local qubit and checks our steering criterion of imaginarity, through the measurement statistics of two dichotomic measurements at Alice's end and two mutually orthogonal basis at Bob's end. If our imaginarity criterion is violated, Bob becomes convinced that Alice has not cheated him and he does not possess a local hidden state (LHS) of a qubit. In essence, the control of Alice over Bob's local imaginarity can be considered as a sufficient criterion to detect nonlocality.

In this work, we apply the robustness of imaginarity as a quantifier of imaginarity~\citep{Hickey,Wu1}. We construct a complementarity relation of imaginarity involving two mutually unbiased bases in $\mathbb{C}^2$~\citep{MUB}. Subsequently, we construct our 2-measurement imaginarity steering inequality (ISI) for bipartite qubit states. We show that the untrusted side of an entangled pair of qubits can steer the quantum imaginarity of the trusted side when ISI is violated. We next show that the set of states satisfying ISI is convex and compact, enabling construction of a witness operator to detect the steering of imaginarity.  The steerability of imaginarity is exemplified through  well-known families of bipartite qubit states~\citep{Werner,Rau2009}. Moreover, by introducing a general tripartite pure state~\citep{Acin2000}, we show that a strong monogamy trade-off~\citep{Dhar2017} exists between its bipartite cuts in terms of ISI. This result implies that two parties can not simultaneously steer the imaginarity of a third party. 

Further, we demonstrate the efficacy of our imaginarity steering relation in comparison with similar other steering criteria which are based on partial knowledge of the state. Specifically, we show that the 2-measurement set-up imaginarity steering inequality proposed in our work is more robust in revealing quantum steering compared to the formalisms based on 3-measurement set-ups~\citep{Mondal, Wei2024}. The imaginarity steering criterion is able to tolerate more noise, and is also more resilient if unsharpness of measurements is introduced. Our analysis shows that the manifestation of ISI not only requires lesser experimental resources in order to determine a fewer number of state parameters, but is also more robust against various noise effects. 

We organise this article as follows. In Section \ref{C2} we present a  brief overview on certain preliminaries of quantum steering. In Section \ref{C3}, we formulate  our imaginarity steering inequality with the help of complementarity relations of imaginarity among two mutually unbiased bases. In Section \ref{C4}, we show how imaginarity revealed through ISI can be detected through the witness operator framework, and also provide examples. In  Section \ref{C5}, we demonstrate the monogamy of ISI in the tripartite scenario. The effectiveness of ISI  under two distinct scenarios of noise and unsharp measurements is revealed  in Section \ref{C6}. Finally, we conclude with a summary in Section \ref{C7}. 

\section{Preliminaries of quantum steering}\label{C2}

Suppose that Alice prepares a joint state and keeps one particle with her and sends the other to Bob. Now, the task of steering Bobs's system by Alice is to convince Bob that the shared state is entangled. Depending upon Alice's measurement on her particle, if the measurement statistics at Bob's side can not be determined from a pre-fixed cheating strategy considered by Alice, then Bob is convinced that the shared state is entangled and his state can be steered or controlled by Alice. The cheating strategy may come from a Local Hidden State(LHS) model~\citep{Wiseman,Jones} given by
\begin{equation}
\sigma_{a|A} = \sum_{\lambda} p(\lambda) ~p(a|A,\lambda) ~\rho_{\lambda}
\label{CS}
\end{equation}
where the distribution of local hidden variable $\lambda$ satisfies $\sum_{\lambda} p(\lambda)=1$ and the probability of getting outcome $a$ from measurement $A$ for a given $\lambda$ at Alice's side is $p(a|A,\lambda)$. The unnormalised conditional state at Bob's side  can be written as, $\sigma_{a|A} = p(a|A) ~\rho_{a|A}$, where $p(a|A)$ is the probability of getting outcome $a$ corresponding to Alice's measurement $A$ and $\rho_{a|A}$ is the normalised conditional state at Bob's side. The LHS corresponding to Bob is given by $\rho_{\lambda}$. As the system of Bob is assumed to be quantum,  the joint probability distribution of getting outcomes $\lbrace a, b \rbrace$ from measurements $\lbrace A, B \rbrace$ corresponding to Alice and Bob respectively, is governed by the Born rule, as follows.
\begin{equation}
p(a,b|A,B) = \sum_{\lambda} p(\lambda) ~p(a|A,\lambda) ~p(b|B,\rho_{\lambda})
\label{LHS}
\end{equation}
where $p(b|B,\rho_{\lambda})=\operatorname{Tr}[\Pi_{b|B} ~\rho_{\lambda}]$ and $p(a,b|A,B)=\operatorname{Tr}[\Pi_{b|B} ~\sigma_{a|A}]$ and $\Pi_{b|B}$ is the projective measurement operator which yields outcome $b$ from measurement $B$ at Bob's side.

The non-existence of the LHS model through the violation of Eq.(\ref{LHS}) is demonstrated through the violation of various steering inequalities. There exists a necessary and sufficient condition for steering of quantum states in the 2-2-2 scenario~\citep{CFFW}, i.e.
\begin{align}
&\sqrt{\langle (A_1 + A_2) B_1 \rangle^2 + \langle (A_1 + A_2) B_2 \rangle^2} ~+ \nonumber\\
&\sqrt{\langle (A_1 - A_2) B_1 \rangle^2 + \langle (A_1 - A_2) B_2 \rangle^2} \leq 2
\label{CFFW}\end{align}
where the dichotomic observables $\lbrace A_1,A_2 \rbrace$ correspond to Alice and the mutually unbiased observables $\lbrace B_1,B_2 \rbrace$ correspond to Bob. The nonlinear inequality(\ref{CFFW}) is known as analogous Clauser-Horne-Shimony-Holt(CHSH) inequality for steering or CFFW inequality.

\section{Steering of quantum imaginarity}\label{C3}

The mathematical structure of quantum mechanics mainly depends on complex numbers, but the verifiable quantities through physical operations are predominantly real. The role of the imaginary part of the elements of a density matrix lead to interesting consequences in quantum information theory. The resource theoretic framework of imaginarity has been developed and imaginarity  quantified as a convex monotone~\citep{Wu,Wu1}. Imaginarity is particularly significant in local state discrimination, since bipartite real mixed states which are perfectly distinguishable via local operations and classical communication (LOCC), can not be distinguished via local real operations and classical communication (LRCC). Imaginarity, alike quantum coherence~\citep{Baumgratz,Girolami,Winter}, is a basis dependent property of a quantum state.

A quantifier of imaginarity, called the robustness of imaginarity, similar to distance-based measures~\citep{Brandao}, minimizes the distance between a given state (say $\rho$) and a state from the set of real states, $\mathcal{R}$. It is defined as~\citep{Hickey,Wu1},
\begin{equation}
\mathscr{I}_R (\rho) = \min_{\tau} \lbrace s\geq 0 : \frac{\rho + s \tau}{1+s} \in \mathcal{R} \rbrace
\end{equation}
by considering minimization over all quantum states $\tau$. Interestingly, $\mathscr{I}_R (\rho)$ has a closed and simplified form for all pure and mixed states. It is expressed as,
\begin{equation}
\mathscr{I}_R (\rho) = \frac{1}{2} \Vert \rho - \rho^T \Vert_1
\label{ROI}
\end{equation}
where, $\rho^T$ is the transposition of $\rho$ and the trace norm $\Vert V \Vert_1 = \operatorname{Tr} [\sqrt{V^{\dagger} V}]$. If $\rho$ is pure, i.e. $\rho=|\psi\rangle\langle\psi|$, then $\mathscr{I}_R (\rho)=\sqrt{1-|\langle \psi^*|\psi \rangle|^2}$. As a bonafide measure, it is a convex monotone under local real operations, and vanishes identically for all real states.

\subsection{Imaginarity complementarity relations}

Consider a qubit state, $\rho= \frac{\openone_2 +\vec{n}.\vec{\sigma}}{2}$, where $\vec{n}=(n_x,n_y,n_z)$ with $0\leq |\vec{n}|\leq 1$, and $\vec{\sigma}$ the Pauli vector. The three mutually unbiased bases in $\mathbb{C}^2$~\citep{MUB} can be generally represented as $B_{1} \equiv \lbrace |u_{\theta,\phi}\rangle, |d_{\theta,\phi}\rangle \rbrace$, $B_{2} \equiv\lbrace \frac{|u_{\theta,\phi}\rangle +|d_{\theta,\phi}\rangle}{\sqrt{2}}, \frac{|u_{\theta,\phi}\rangle -|d_{\theta,\phi}\rangle}{\sqrt{2}} \rbrace$ and $B_{3} \equiv\lbrace \frac{|u_{\theta,\phi}\rangle +i|d_{\theta,\phi}\rangle}{\sqrt{2}}, \frac{|u_{\theta,\phi}\rangle -i|d_{\theta,\phi}\rangle}{\sqrt{2}} \rbrace$,  where  $|u_{\theta,\phi}\rangle = \cos\frac{\theta}{2}|0\rangle + e^{i\phi} \sin\frac{\theta}{2} |1\rangle$, $|d_{\theta,\phi}\rangle = \sin\frac{\theta}{2}|0\rangle - e^{i\phi} \cos\frac{\theta}{2} |1\rangle$, $0\leq \theta \leq \pi$ and $0\leq \phi \leq 2\pi$. The modulus of inner product of a pair of bases chosen from two different MUBs always yields the value of $\frac{1}{\sqrt{2}}$ and each basis from $\lbrace B_1,B_2,B_3 \rbrace$ constitutes set of orthonormal vectors. By representing $\rho$ in the above mentioned MUBs (e.g. $\rho_{B_{1}} = \begin{pmatrix}
\langle u_{\theta,\phi}|\rho|u_{\theta,\phi} \rangle & \langle u_{\theta,\phi}|\rho|d_{\theta,\phi} \rangle \\
\langle d_{\theta,\phi}|\rho|u_{\theta,\phi} \rangle & \langle d_{\theta,\phi}|\rho|d_{\theta,\phi} \rangle
\end{pmatrix}$ in $B_{1}$ basis), we can find from Eq.(\ref{ROI}) that,
\begin{eqnarray}
&\mathscr{I}_R^{B_1}(\rho) = \mathscr{I}_R^{B_2}(\rho) = |n_y \cos\phi -n_x \sin\phi| \nonumber\\
&\mathscr{I}_R^{B_3}(\rho) = |n_x \cos\theta \cos\phi + n_y \cos\theta \sin\phi -n_z \sin\theta|.
\end{eqnarray} 
The uncertainty between unequal weights of imaginarities among different MUBs leads to a complementarity relation that can be minimally demonstrated by considering two MUBs, and is given by
\begin{equation}
\mathscr{I}_R^{B_2(B_1)}(\rho) + \mathscr{I}_R^{B_3}(\rho) \leq \sqrt{2}
\end{equation}
where the upper bound corresponds to $\theta=0$ and $\phi=\pi$. The choice of MUBs, i.e., $\lbrace \sigma_x, \sigma_y, \sigma_z \rbrace$-basis is  optimal.

The density operator of $\rho$ can be represented as $\begin{pmatrix}
\frac{1+n_x}{2} & \frac{i n_y + n_z}{2} \\
\frac{-i n_y + n_z}{2} & \frac{1-n_x}{2}
\end{pmatrix}$, $\begin{pmatrix}
\frac{1+n_y}{2} & \frac{-i n_x + n_z}{2} \\
\frac{i n_x + n_z}{2} & \frac{1-n_y}{2}
\end{pmatrix}$ and $\begin{pmatrix}
\frac{1+n_z}{2} & \frac{n_x - i n_y}{2} \\
\frac{n_x + i n_y}{2} & \frac{1-n_z}{2}
\end{pmatrix}$ corresponding to x-, y- and z-basis respectively. In terms of the above representations, we observe from Eq.(\ref{ROI}) that,
\begin{eqnarray}
&\mathscr{I}_R^x(\rho) = \mathscr{I}_R^z(\rho) = n_y \nonumber\\
&\mathscr{I}_R^y(\rho) = n_x
\end{eqnarray}
This shows the feature of basis dependence in terms of the robustness of imaginarity. A complementarity relation between two mutually orthogonal basis x and y ca be similarly constructed as,
\begin{equation}
\mathscr{I}_R^x(\rho) + \mathscr{I}_R^y(\rho) \leq \sqrt{2}
\label{ICR1}
\end{equation}
The equality of Eq.(\ref{ICR1}) holds for the state $\rho= \frac{\openone_2 + \frac{1}{\sqrt{2}} (\sigma_x + \sigma_y)}{2}$, while  the left hand sides of Eq.(\ref{ICR1}) becomes 1 for the eigenstates of $\sigma_y$.

\subsection{ Imaginarity steering inequality}

Like coherence, imaginarity also explores a portion of knowledge hidden in a quantum state. As the properties are  different from each other, hence a unique steering criterion can be formulated using imaginarity. To formulate a steering criterion, a LHS model has to be set up based on imaginarity complementarity relations.

The cheating strategy given by Eq.(\ref{CS}) can be recast for measurement along $A$-direction and outcome $a$ at Alice's side in terms of imaginarity in any basis at Bob's side as,
\begin{equation}
\mathscr{I}_R(\sigma_{a|A}) = \sum_{\lambda} p(\lambda) ~p(a|A,\lambda) ~\mathscr{I}_R(\rho_{\lambda})
\label{CS1}
\end{equation}
by using the facts that, $\lbrace p(\lambda), p(a|A,\lambda) \rbrace \in \mathbb{R} ~\forall \lambda$ and matrix norms are absolutely homogeneous of degree 1. As $0\leq \lbrace \mathscr{I}_R(\rho_{\lambda}), p(\lambda), p(a|A,\lambda) \rbrace \leq 1$ for all local hidden variable $\lambda$ and $p(a|A) \in \mathbb{R}$, it follows from from Eq.(\ref{CS1}) that,
\begin{align}
\mathscr{I}_R(\sigma_{a|A}) &= p(a|A) ~\mathscr{I}_R(\rho_{a|A}) \nonumber\\
&\leq \max_{\lambda} \mathscr{I}_R(\rho_{\lambda}) ~[\sum_{\lambda} p(\lambda) ~p(a|A,\lambda)] \nonumber\\
&= \mathscr{I}_R(\rho_{\lambda_{\max}}) ~p(a|A)
\label{I1}
\end{align}
Using $\sum_{a=+,-} p(a|A) =1$, we can write from Eq.(\ref{I1}) that,
\begin{align}
\sum_{a=+,-} p(a|A) ~\mathscr{I}_R(\rho_{a|A}) &\leq \mathscr{I}_R(\rho_{\lambda_{\max}}) ~\sum_{a=+,-} p(a|A) \nonumber\\
&= \mathscr{I}_R(\rho_{\lambda_{\max}})
\end{align}

If Bob computes imaginarity in  x-basis after Alice's measurement in y-basis with outcome $a_1$, then we have
\begin{equation}
\sum_{a_1=+,-} p(a_1|y) ~\mathscr{I}_R^x(\rho_{a_1|y}) \leq \mathscr{I}_R^x(\rho_{\lambda_{\max}})
\label{x} 
\end{equation}
%or
%\begin{equation}
%\sum_{a_1=+,-} p(a_1|y) ~\mathscr{I}_R^z(\rho_{a_1|y}) \leq \mathscr{I}_R^z(\rho_{\lambda_{\max}})
%\label{z}
%\end{equation}
%respectively.

Similarly, if Bob calculates imaginarity in y-basis after Alice's measurement in x-basis with outcome $a_2$, then 
\begin{equation}
\sum_{a_2=+,-} p(a_2|x) ~\mathscr{I}_R^y(\rho_{a_2|x}) \leq \mathscr{I}_R^y(\rho_{\lambda_{\max}}) 
\label{y}
\end{equation}

Now by summing inequalities (\ref{x}) and (\ref{y}) and using Eq.(\ref{ICR1}) subsequently, we obtain
\begin{align}
I_2= \sum_{a_1=+,-} p(a_1|y) ~\mathscr{I}_R^x(\rho_{a_1|y}) + \sum_{a_2=+,-} p(a_2|x) ~\mathscr{I}_R^y(\rho_{a_2|x}) \leq \sqrt{2}
\label{ISI1}
\end{align}

%And summing inequalities (\ref{x}), (\ref{z}) and (\ref{y}) and using Eq.(\ref{ICR2}) in a similar fashion, we get
%\begin{align}
%I_3= \sum_{a_1=+,-} p(a_1|y) ~\mathscr{I}_R^x(\rho_{a_1|y}) + \sum_{a_2=+,-} p(a_2|x) ~\mathscr{I}_R^y(\rho_{a_2|x})& \nonumber\\
%+ \sum_{a_1=+,-} p(a_1|y) ~\mathscr{I}_R^z(\rho_{a_1|y}) \leq \sqrt{5}&
%\label{ISI2}
%\end{align}
We have constructed the inequality (\ref{ISI1}) by modelling LHS for Bob through the imaginarity complementarity relation. The violation of it thus implies steering of Bob's local imaginarity by Alice. Hence we call the sufficient criteria given by inequalities (\ref{ISI1}) as \textit{imaginarity steering inequalities (ISI)} under the 2-measurement scenario. It certifies that imaginarity can be an aid in quantum nonlocality where full information about LHS is not required.

\begin{theorem}
All separable states satisfy imaginarity steering inequalities.
\end{theorem}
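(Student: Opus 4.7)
The plan is to exploit the well-known fact that every separable state automatically admits a local hidden state (LHS) model with exactly the structure assumed in Eq.~(\ref{CS}), and then to feed this model straight into the derivation in Eqs.~(\ref{CS1})--(\ref{ISI1}). In other words, the theorem should drop out as a special case of the derivation the authors have already performed.

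First I would take an arbitrary separable bipartite qubit state $\rho_{AB}=\sum_{\lambda} q_\lambda\,\rho_A^\lambda\otimes\rho_B^\lambda$ with $q_\lambda\ge 0$ and $\sum_\lambda q_\lambda=1$. For any dichotomic measurement $A$ at Alice's end with projector $\Pi_{a|A}$, tracing out her subsystem gives the unnormalised conditional state
\begin{equation*}
\sigma_{a|A}=\mathrm{Tr}_A\!\left[(\Pi_{a|A}\otimes\openone)\,\rho_{AB}\right]=\sum_\lambda q_\lambda\, p(a|A,\lambda)\,\rho_B^\lambda,
\end{equation*}
with $p(a|A,\lambda):=\mathrm{Tr}[\Pi_{a|A}\rho_A^\lambda]\in[0,1]$. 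Identifying $p(\lambda):=q_\lambda$ and $\rho_\lambda:=\rho_B^\lambda$, this is exactly Eq.~(\ref{CS}); hence every separable state is LHS with respect to every choice of Alice's measurements.

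Next I would combine the LHS decomposition with convexity of the robustness of imaginarity (immediate from Eq.~(\ref{ROI}) and the triangle inequality for $\|\cdot\|_1$ applied to $\sigma_{a|A}-\sigma_{a|A}^T$) to obtain, for any fixed basis $\mathcal{B}$,
\begin{equation*}
\sum_{a}p(a|A)\,\mathscr{I}_R^{\mathcal{B}}(\rho_{a|A})\le \sum_\lambda q_\lambda\,\mathscr{I}_R^{\mathcal{B}}(\rho_B^\lambda),
\end{equation*}
where I used $\sum_a p(a|A,\lambda)=1$. Specialising to $(\mathcal{B},A)=(x,y)$ and to $(\mathcal{B},A)=(y,x)$ and adding the two inequalities yields
\begin{equation*}
I_2\le \sum_\lambda q_\lambda\bigl[\mathscr{I}_R^x(\rho_B^\lambda)+\mathscr{I}_R^y(\rho_B^\lambda)\bigr]\le \sqrt{2},
\end{equation*}
where the last step applies the single-qubit imaginarity complementarity relation Eq.~(\ref{ICR1}) pointwise to each $\rho_B^\lambda$ and then uses $\sum_\lambda q_\lambda=1$.

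There is no deep obstacle; the proof is essentially a bookkeeping exercise once separable states are identified with LHS states. The subtlest point, and the one I would be careful about, is to combine the LHS averaging with the complementarity relation \emph{before} bounding the two basis-imaginarities individually: a naive use of Eq.~(\ref{I1}) would replace $\sum_\lambda q_\lambda\mathscr{I}_R^{\mathcal{B}}(\rho_B^\lambda)$ by $\max_\lambda\mathscr{I}_R^{\mathcal{B}}(\rho_B^\lambda)$ in each basis independently, yielding only $I_2\le \max_\lambda\mathscr{I}_R^x(\rho_B^\lambda)+\max_\lambda\mathscr{I}_R^y(\rho_B^\lambda)$, which is not a priori bounded by $\sqrt{2}$ since the two maximisers may differ. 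The convex-combination route sketched above sidesteps this issue and delivers the bound $\sqrt{2}$ cleanly.
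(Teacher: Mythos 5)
Your proof is correct and follows essentially the same route as the paper's: decompose the separable state, compute Bob's conditional state as a convex mixture of the $\rho_B^\lambda$, use subadditivity and real-homogeneity of the trace norm to pull the imaginarity inside the $\lambda$-sum, and then apply the complementarity relation Eq.~(\ref{ICR1}) pointwise to each $\rho_B^\lambda$ before summing over $\lambda$. Your closing remark about avoiding the $\max_\lambda$ bound of Eq.~(\ref{I1}) is a fair observation, but it is precisely the convex-sum route that the paper's own proof of this theorem adopts, so the two arguments coincide.
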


\begin{proof}
A general bipartite separable state has the following form,
\begin{equation}
\rho_{AB}^{\text{sep}} = \sum_{\lambda} p(\lambda) ~\rho_{\lambda}^A \otimes \rho_{\lambda}^B
\end{equation}
Now depending upon Alice's projective measurement $A$ and outcome $a$, Bob's normalized conditional state becomes,
\begin{align}
\rho_{a|A} &= \frac{\sum_{\lambda} p(\lambda) ~\operatorname{Tr}_A [(\Pi_{a|A} \otimes \openone_2)~(\rho_{\lambda}^A \otimes \rho_{\lambda}^B)]}{\sum_{\lambda} p(\lambda) ~\operatorname{Tr} [(\Pi_{a|A} \otimes \openone_2)~(\rho_{\lambda}^A \otimes \rho_{\lambda}^B)]} \nonumber\\
&= \frac{\sum_{\lambda} p(\lambda) ~\operatorname{Tr}_A [\Pi_{a|A} ~\rho_{\lambda}^A] ~\rho_{\lambda}^B}{\sum_{\lambda} p(\lambda) ~\operatorname{Tr}_A [\Pi_{a|A} ~\rho_{\lambda}^A] ~\operatorname{Tr}_B [\rho_{\lambda}^B]} \nonumber\\
&= \frac{\sum_{\lambda} p(\lambda) ~p(a|A,\lambda) ~\rho_{\lambda}^B}{\sum_{\lambda} p(\lambda) ~p(a|A,\lambda)} \nonumber\\
&= \frac{\sum_{\lambda} p(\lambda) ~p(a|A,\lambda) ~\rho_{\lambda}^B}{p(a|A)}
\label{sep}
\end{align} 

Since $\lbrace p(\lambda), p(a|A,\lambda), p(a|A) \rbrace \in \mathbb{R} ~\forall \lambda$, therefore using subadditivity and homogeneity properties of the trace norm of a matrix, i.e., $\mathscr{I}_R(\sum_i \varrho_i) \leq \sum_i \mathscr{I}_R(\varrho_i)$ and $\mathscr{I}_R(c \varrho) = c \mathscr{I}_R(\varrho)$ respectively, where $c \in \mathbb{R}$, and summing over all outcomes at Alice's side, we can write from Eq.(\ref{sep}) that,
\begin{align}
\sum_{a=+,-} p(a|A) ~\mathscr{I}_R(\rho_{a|A}) &\leq \sum_{a,\lambda} p(\lambda) ~p(a|A,\lambda) ~\mathscr{I}_R(\rho_{\lambda}^B) \nonumber\\
&= \sum_{\lambda} p(\lambda) ~\mathscr{I}_R(\rho_{\lambda}^B)
\label{sep1}
\end{align}
where we apply $\sum_a p(a|A,\lambda) =1$.

Now by employing Eq.(\ref{sep1}) in different bases as given in the inequality(\ref{ICR1}), the left hand side of ISI(\ref{ISI1}) becomes,
\begin{align}
I_2(\rho_{AB}^{\text{sep}}) &\leq \sum_{\lambda} p(\lambda) ~(\mathscr{I}_R^x(\rho_{\lambda}^B) + \mathscr{I}_R^y(\rho_{\lambda}^B)) \nonumber\\
&\leq \sqrt{2} ~\sum_{\lambda} p(\lambda) = \sqrt{2}
\label{sep2}
\end{align}
where we use $\sum_{\lambda} p(\lambda)=1$. As a result, we observe that, the violation of ISI requires the composite system to be entangled.
\end{proof}

A general bipartite qubit state is represented as~\citep{Fano},
\begin{equation}
\rho_{AB} = \frac{1}{4} [\openone_2 \otimes \openone_2 + \vec{m}.\vec{\sigma} \otimes \openone_2 + \openone_2 \otimes \vec{n}.\vec{\sigma} + \sum_{i,j=1}^3 t_{ij} \sigma_i \otimes \sigma_j]
\label{gen}
\end{equation}
where $\openone_2$ is $2\times 2$ identity matrix, $\lbrace \sigma_i \rbrace_{i=1}^3$ are Pauli spin matrices, the vectors $\lbrace \vec{m}, \vec{n} \rbrace \in \mathbb{R}^3$ ($|\vec{m}| \leq 1$, $|\vec{n}| \leq 1$) correspond to Alice and Bob respectively and the elements $t_{ij} = \operatorname{Tr}[(\sigma_i \otimes \sigma_j) \rho_{AB}]$ form $3\times 3$ correlation matrix $T=\lbrace t_{ij} \rbrace_{i,j=1}^3$. Eq.(\ref{gen}) can be chosen as the most general test set of density matrices to demonstrate the steering of imaginarity at Bob's side. In general, the detection of quantum nonlocal correlations, {\it viz.}, Bell nonlocality and quantum steering is determined by 9 out of 15 real parameters of the correlation matrix, $T$ as obtained from Eq.(\ref{gen})~\citep{Horodecki1995,Horodecki1996}. However, our 2-measure criterion for steering imaginarity depends upon the limited parametric space of the function $I_2(\rho_{AB})$ as discussed below.
\begin{theorem}
The imaginarity of Bob's local state can be steered by Alice with the use of generalised two-qubit state given by Eq.(\ref{gen}) when
\begin{equation}
I_2(\rho_{AB})=\frac{1}{2}(|n_1-t_{11}|+|n_1+t_{11}|+|n_2-t_{22}|+|n_2+t_{22}|)>\sqrt{2}
\end{equation} where the nonlocality can be demonstrated by using the above 4 real parameters only.
\end{theorem}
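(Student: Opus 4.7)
The plan is to compute $I_2(\rho_{AB})$ in closed form for the Fano decomposition (\ref{gen}), using the single-basis imaginarities $\mathscr{I}_R^x(\rho)=|n_y|$ and $\mathscr{I}_R^y(\rho)=|n_x|$ derived earlier in this section. First I would write Alice's projectors as $\Pi_{a|A}=\tfrac{1}{2}(\openone_2+a\,\sigma_A)$ for $a=\pm 1$ and $A\in\{x,y\}$, and use $\operatorname{Tr}\sigma_i=0$ together with $\operatorname{Tr}(\sigma_i\sigma_j)=2\delta_{ij}$ to evaluate Bob's unnormalised conditional states. For Alice's $\sigma_y$ measurement this gives
\begin{equation*}
\sigma_{a_1|y}=\tfrac{1}{4}\bigl[(1+a_1 m_2)\openone_2+\textstyle\sum_{j}(n_j+a_1 t_{2j})\sigma_j\bigr],
\end{equation*}
and the analogous expression with $(m_2,t_{2j})\to(m_1,t_{1j})$ for $\sigma_x$. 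From this I read off $p(a|A)=\tfrac{1}{2}(1+a\,m_A)$ and Bob's normalised conditional Bloch vector with components $(n_j+a\,t_{Aj})/(1+a\,m_A)$.

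Next I would substitute into the ISI terms. Since $\mathscr{I}_R^x$ returns the $y$-component and $\mathscr{I}_R^y$ the $x$-component of the argument's Bloch vector, only a single coordinate of Bob's conditional state contributes per term, namely
\begin{equation*}
\mathscr{I}_R^x(\rho_{a_1|y})=\frac{|n_2+a_1 t_{22}|}{1+a_1 m_2}, \quad \mathscr{I}_R^y(\rho_{a_2|x})=\frac{|n_1+a_2 t_{11}|}{1+a_2 m_1},
\end{equation*}
the absolute values arising because $\mathscr{I}_R$ is built from a trace norm. The critical cancellation is that $p(a|A)=\tfrac{1}{2}(1+a\,m_A)$ exactly kills the denominator, so $p(a_1|y)\,\mathscr{I}_R^x(\rho_{a_1|y})=\tfrac{1}{2}|n_2+a_1 t_{22}|$ and similarly for the other term. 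Summing over $a_1,a_2\in\{+,-\}$ and adding the two contributions then reproduces the four-term expression claimed for $I_2(\rho_{AB})$.

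Finally, I would point out the ``four real parameters only'' statement. The off-diagonal entries $t_{2j}$ for $j\neq 2$, $t_{1j}$ for $j\neq 1$, the entire third row/column of $T$, Alice's marginals $m_1,m_2,m_3$, and Bob's $n_3$ all drop out because single-basis imaginarity probes only one Bloch coordinate per measurement and the marginal factor $(1\pm a\,m_A)$ is cancelled against the normalisation; only the pairs $(n_1,t_{11})$ and $(n_2,t_{22})$ survive. Violation of $I_2>\sqrt{2}$ then certifies steering via the bound (\ref{ISI1}). There is no substantial obstacle here; the calculation is mechanical, and the only care required is tracking which Bloch coordinate each basis-specific imaginarity sees and noticing the exact cancellation between $p(a|A)$ and the conditional-state normalisation.
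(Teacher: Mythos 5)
Your proposal is correct and follows essentially the same route as the paper's proof: compute $p(a|A)=\tfrac{1}{2}(1+a\,m_A)$ and the conditional Bloch components $(n_j+a\,t_{Aj})/(1+a\,m_A)$, observe the cancellation of the marginal factor against the normalisation, and sum to obtain the four-term expression depending only on $(n_1,t_{11},n_2,t_{22})$. The intermediate formulas you write match Eqs.~(\ref{Ix}) and (\ref{Iy}) of the paper exactly, so there is nothing to add.
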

\begin{proof}
By using the general parametric representation of a two-qubit state $\rho_{AB}$ given by Eq.(\ref{gen}), we have $p(\pm|x)=\operatorname{Tr}[\big(\frac{\openone_2 \pm \sigma_x}{2}\otimes \openone_2\big) ~\rho_{AB}] = \frac{1 \pm m_1}{2}$ and $p(\pm|y)=\operatorname{Tr}[\big(\frac{\openone_2 \pm \sigma_y}{2}\otimes \openone_2\big) ~\rho_{AB}] = \frac{1 \pm m_2}{2}$ respectively. Bob computes imaginarity of his local state depending upon the measurement done at Alice's side. If Alice measures in x-basis, then imaginarity of Bob's local state in y-basis becomes
\begin{equation}
\mathscr{I}_R^y(\rho_{\pm|x}) = \frac{|n_1 \pm t_{11}|}{1 \pm m_1}
\label{Ix}
\end{equation}
corresponding to outcomes $\pm$ at Alice's side.
On the other hand, if Alice measures in y-basis, the imaginarity of Bob's local state in x-basis becomes
\begin{equation}
\mathscr{I}_R^x(\rho_{\pm|y}) = \frac{|n_2 \pm t_{22}|}{1 \pm m_2}
\label{Iy}
\end{equation}
corresponding to outcomes $\pm$ at Alice's side. By using Eq.(\ref{Ix}) and Eq.(\ref{Iy}) in the inequality(\ref{ISI1}), it follows that
\begin{equation}
I_2(\rho_{AB})=\frac{1}{2}(|n_1-t_{11}|+|n_1+t_{11}|+|n_2-t_{22}|+|n_2+t_{22}|)
\label{IGen}
\end{equation}
which depends on 4 real parameters $n_1$, $n_2$, $t_{11}$ and $t_{22}$ only, where $0 \leq \lbrace |n_1|,|n_2|,|t_{11}|,|t_{22}| \rbrace \leq 1$. Hence, nonlocality can be demonstrated through the violation of inequality(\ref{ISI1}), i.e. $I_2(\rho_{AB}) > \sqrt{2}$ with the use of 4 real parameters of the state given by Eq.(\ref{gen}) which shows that it needs smaller number of state parameters to demonstrate nonlocality compared to that required for other forms of nonlocality. Since our formulation requires lesser tomographic knowledge of the state, it promises experimental advantage for determination of the non-classicality.
\end{proof}

\section{Witnessing Steering of Imaginarity}\label{C4}

In this section, we show that the formalism of witness operators can be applied to identify the quantum states which obey our imaginarity steering relation. To this end, we first prove that the set of states obeying the ISI (\ref{ISI1}) is both convex and compact.

\begin{theorem}
The set of density matrices $\mathcal{S}=\lbrace\rho_{AB}: I_2(\rho_{AB})\leq \sqrt{2}\rbrace$ form a convex and compact set. 
\end{theorem}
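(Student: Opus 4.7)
The plan is to recast $I_2(\rho_{AB})$ in a form that makes both convexity and continuity manifest, and then invoke standard facts about sublevel sets. The key observation is that by the homogeneity of $\mathscr{I}_R$ (used already in Eq.~(\ref{CS1})), each summand $p(a|A)\,\mathscr{I}_R(\rho_{a|A})$ equals $\mathscr{I}_R(\sigma_{a|A})$, where $\sigma_{a|A}=\mathrm{Tr}_A[(\Pi_{a|A}\otimes\openone_2)\rho_{AB}]$ is the \emph{unnormalised} conditional state and hence a linear function of $\rho_{AB}$. So I would rewrite
\begin{equation}
I_2(\rho_{AB}) \;=\; \sum_{a_1=\pm}\mathscr{I}_R^{x}(\sigma_{a_1|y}) \;+\; \sum_{a_2=\pm}\mathscr{I}_R^{y}(\sigma_{a_2|x}),
\end{equation}
which removes the awkward division by a probability and reduces everything to a fixed linear map followed by the functional $\tau\mapsto\frac{1}{2}\|\tau-\tau^{T}\|_1$.

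For convexity I would argue as follows. Given $\rho_{AB}=p\rho^{(1)}_{AB}+(1-p)\rho^{(2)}_{AB}$ with $p\in[0,1]$ and both $\rho^{(i)}_{AB}\in\mathcal{S}$, linearity of the partial trace gives $\sigma_{a|A}=p\,\sigma^{(1)}_{a|A}+(1-p)\,\sigma^{(2)}_{a|A}$. The functional $\tau\mapsto\frac{1}{2}\|\tau-\tau^T\|_1$ is convex because $\tau\mapsto\tau-\tau^T$ is linear and the trace norm satisfies the triangle inequality and positive homogeneity. Applying this to each of the four conditional states and summing yields $I_2(\rho_{AB})\leq p\,I_2(\rho^{(1)}_{AB})+(1-p)\,I_2(\rho^{(2)}_{AB})\leq\sqrt{2}$, so $\mathcal{S}$ is convex as a sublevel set of a convex function.

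For compactness I would note that the full set of two-qubit density matrices $\mathcal{D}(\mathbb{C}^2\otimes\mathbb{C}^2)$ is a compact subset of a finite-dimensional real vector space (closed under the operator-norm topology and bounded by $\mathrm{Tr}\,\rho=1$), so it suffices to show $\mathcal{S}$ is closed inside $\mathcal{D}$. Because $\rho_{AB}\mapsto\sigma_{a|A}$ is linear hence continuous, $\tau\mapsto\tau^T$ is continuous, and the trace norm $\|\cdot\|_1$ is continuous on a finite-dimensional space, the function $I_2$ is a continuous real-valued function on $\mathcal{D}$. Hence $\mathcal{S}=I_2^{-1}\bigl((-\infty,\sqrt{2}]\bigr)$ is closed, and a closed subset of a compact set is compact.

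There is no real obstacle here; the only point requiring care is the bookkeeping step that turns $p(a|A)\,\mathscr{I}_R(\rho_{a|A})$ into $\mathscr{I}_R(\sigma_{a|A})$ so that the $\rho_{AB}$-dependence becomes manifestly linear before the (convex, continuous) imaginarity functional is applied. Once that is done, both properties follow from the standard fact that sublevel sets of continuous convex functions on compact convex domains are convex and compact.
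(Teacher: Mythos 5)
Your proof is correct, and it takes a genuinely different route from the paper's. The paper first reduces $I_2$ to its closed form in the Bloch/correlation parameters, $I_2(\rho_{AB})=\tfrac{1}{2}\left(|n_1-t_{11}|+|n_1+t_{11}|+|n_2-t_{22}|+|n_2+t_{22}|\right)$, and then proves convexity by the triangle inequality on these real linear functionals of the state, and closedness via an explicit $\varepsilon$--$\delta$ continuity estimate in the same parameters (together with a separate boundedness lemma). You instead stay at the operator level: the identity $p(a|A)\,\mathscr{I}_R(\rho_{a|A})=\mathscr{I}_R(\sigma_{a|A})$ makes the $\rho_{AB}$-dependence pass through the linear map $\rho_{AB}\mapsto\sigma_{a|A}=\operatorname{Tr}_A[(\Pi_{a|A}\otimes\openone_2)\rho_{AB}]$, after which convexity and continuity of $I_2$ follow from the corresponding properties of the seminorm $\tau\mapsto\tfrac{1}{2}\Vert\tau-\tau^{T}\Vert_1$ (one should note, as you implicitly do, that evaluating $\mathscr{I}_R$ in the $x$- or $y$-basis only inserts a fixed unitary conjugation, which is again linear, so nothing changes). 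Your route is cleaner and more general: it needs no explicit parametrization, handles the degenerate case $p(a|A)=0$ gracefully, and would extend verbatim to other measurement choices or higher dimensions; it also replaces the paper's somewhat informal closedness lemma by the standard fact that the preimage of a closed interval under a continuous function is closed inside the compact set of density matrices. What the paper's computation buys in exchange is the explicit four-parameter formula and concrete continuity constants, which it reuses later for the witness construction.
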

\begin{proof}
Let us take any two density matrices $\rho_{AB}^{(1)}$ and $\rho_{AB}^{(2)}$ such that $\rho_{AB}^{(1)},\rho_{AB}^{(2)} \in \mathcal{S}$, {\it i.e.},
\begin{eqnarray}
I_2(\rho_{AB}^{(1)})\leq \sqrt{2}, \\
I_2(\rho_{AB}^{(2)})\leq \sqrt{2}
\label{Irho12}
\end{eqnarray}
Let us consider that a density matrix, $\rho_{AB}^{(c)}$ can be expressed as a convex combination of $\rho_{AB}^{(1)}$ and $\rho_{AB}^{(2)}$, implying
\begin{equation}
\rho_{AB}^{(c)} = \alpha \rho_{AB}^{(1)} + (1-\alpha) \rho_{AB}^{(2)}
\label{rhoc}
\end{equation}
where $0 \leq \alpha \leq 1$. To prove convexity of the set $\mathcal{S}$, we have to show that
\begin{equation}
I_2(\rho_{AB}^{(c)}) \leq \alpha I_2(\rho_{AB}^{(1)}) + (1-\alpha) I_2(\rho_{AB}^{(2)}), ~\forall \rho_{AB}^{(c)}
\label{con}
\end{equation}
In respect of the density matrices $\rho_{AB}^{(k)} ~\forall k\in \lbrace 1,2 \rbrace$, by using Eq.(\ref{IGen}), we can write 
\begin{equation}
I_2(\rho_{AB}^{(k)})=\frac{1}{2}(|n_1^{(k)}-t_{11}^{(k)}|+|n_1^{(k)}+t_{11}^{(k)}|+|n_2^{(k)}-t_{22}^{(k)}|+|n_2^{(k)}+t_{22}^{(k)}|), ~\forall k\in \lbrace 1,2 \rbrace
\end{equation}
For the state $\rho_{AB}^{(c)}$ given by Eq.(\ref{rhoc}),   we have
\begin{align}
I_2(\rho_{AB}^{(c)})= \frac{1}{2} \Big[ &|\alpha ~(n_1^{(1)}-t_{11}^{(1)}) + (1-\alpha) ~(n_1^{(2)}-t_{11}^{(2)})| \nonumber\\
& +|\alpha ~(n_1^{(1)}+t_{11}^{(1)}) + (1-\alpha) ~(n_1^{(2)}+t_{11}^{(2)})| \nonumber\\
& +|\alpha ~(n_2^{(1)}-t_{22}^{(1)}) + (1-\alpha) ~(n_2^{(2)}-t_{22}^{(2)})| \nonumber\\
& +|\alpha ~(n_2^{(1)}+t_{22}^{(1)}) + (1-\alpha) ~(n_2^{(2)}+t_{22}^{(2)})| \Big]
\end{align}
By applying the triangle inequality, $|z_1 + z_2| \leq |z_1| + |z_2|$ and $|w z| = w |z|$ for any $\lbrace z_1,z_2,z,w \rbrace \in \mathbb{R}^3$ we can rewrite it as,
\begin{align}
I_2(\rho_{AB}^{(c)}) \leq &\frac{1}{2} \Big[ \alpha ~|n_1^{(1)}-t_{11}^{(1)}| + (1-\alpha) ~|n_1^{(2)}-t_{11}^{(2)}| \nonumber\\
& +\alpha ~|n_1^{(1)}+t_{11}^{(1)}| + (1-\alpha) ~|n_1^{(2)}+t_{11}^{(2)}| \nonumber\\
& +\alpha ~|n_2^{(1)}-t_{22}^{(1)}| + (1-\alpha) ~|n_2^{(2)}-t_{22}^{(2)}| \nonumber\\
& +\alpha ~|n_2^{(1)}+t_{22}^{(1)}| + (1-\alpha) ~|n_2^{(2)}+t_{22}^{(2)}| \Big] \nonumber\\
&= \alpha I_2(\rho_{AB}^{(1)}) + (1-\alpha) I_2(\rho_{AB}^{(2)}) \nonumber\\
&\leq \sqrt{2}
\end{align}
 This completes the proof of inequality(\ref{con}) which means that any convex combination of two density matrices chosen from the set $\mathcal{S}$ lies in $\mathcal{S}$, showing that $\mathcal{S}$ is a convex set.

We now prove that the set $\mathcal{S}=\lbrace\rho_{AB}: I_2(\rho_{AB})\leq \sqrt{2}\rbrace$ is a compact set. A set can be called as compact if it is both bounded and closed. 
\begin{lemma}
$\mathcal{S}$ is a bounded set.
\end{lemma}
All elements of $\mathcal{S}$ can be characterised by upper and lower bounds which are real numbers as the eigenspectrum of any density matrix $\rho_{AB}$ in $\mathbb{C}^2 \otimes \mathbb{C}^2$ lies in $[0,1]$ and $\operatorname{Tr}[\rho_{AB}]=1$. Hence $\mathcal{S}$ is a bounded set.
\begin{lemma}
The function $I_2(\rho_{AB})$ of the density matrices $\rho_{AB}$ in $\mathbb{C}^2 \otimes \mathbb{C}^2$ is continuous, i.e. $I_2: \mathbb{C}^2 \otimes \mathbb{C}^2 \rightarrow \mathbb{R}$ is continuous at $\rho_{AB}^0 \in \mathbb{C}^2 \otimes \mathbb{C}^2$ for every $\varepsilon > 0$, there exists a $\delta > 0$ such that $\Vert\rho_{AB}-\rho_{AB}^0\Vert <\delta$ implies $|I_2(\rho_{AB})-I_2(\rho_{AB}^0)|<\varepsilon ~\forall \rho_{AB} \in \mathbb{C}^2 \otimes \mathbb{C}^2$.
\end{lemma}
Let us represent the density matrix $\rho_{AB}^0$ in terms of Eq.(\ref{gen}) with no loss of generality as
\begin{equation}
\rho_{AB}^0 = \frac{1}{4} [\openone_2 \otimes \openone_2 + \vec{m}^0.\vec{\sigma} \otimes \openone_2 + \openone_2 \otimes \vec{n}^0.\vec{\sigma} + \sum_{i,j=1}^3 t_{ij}^0 \sigma_i \otimes \sigma_j]
\label{gen0}
\end{equation}
such that $m_i-m_i^0 =\mathfrak{a}_i, n_i-n_i^0 =\mathfrak{b}_i, t_{ij}-t_{ij}^0 =\mathfrak{c}_{ij} ~\forall i,j\in \lbrace 1,2,3 \rbrace$. Now we can write that
\begin{align}
\Vert \rho_{AB}-\rho_{AB}^0\Vert &\leq \frac{c}{4} \Vert \sum_{i,j}[(\sigma_i \otimes \openone_2) + (\openone_2 \otimes \sigma_i) + (\sigma_i \otimes \sigma_j)] \Vert \nonumber\\
&= \frac{\sqrt{15}}{2} c = \delta
\end{align}
where $c=\max \lbrace |\mathfrak{a}_i|, |\mathfrak{b}_i|, |\mathfrak{c}_{ij}|\rbrace_{i,j} ~\forall i,j\in \lbrace 1,2,3 \rbrace$ and $\delta>0$.

Next, by applying Eq.(\ref{IGen}) for the density matrices $\rho_{AB}$ and $\rho_{AB}^0$, we have
\begin{align}
|I_2(\rho_{AB})-I_2(\rho_{AB}^0)| =& \frac{1}{2} \Big|(|n_1-t_{11}|+|n_1+t_{11}|+|n_2-t_{22}| \nonumber\\
&+|n_2+t_{22}|-|n_1^0 -t_{11}^0|-|n_1^0 +t_{11}^0| \nonumber\\
&-|n_2^0 -t_{22}^0|-|n_2^0 +t_{22}^0|)\Big| \nonumber\\
\leq & \frac{1}{2} \Big| |n_1-t_{11}-n_1^0 +t_{11}^0| \nonumber\\
&+ |n_1+t_{11}-n_1^0 -t_{11}^0| \nonumber\\
&+ |n_2-t_{22}-n_2^0 +t_{22}^0| \nonumber\\
&+ |n_2+t_{22}-n_2^0 -t_{22}^0| \Big| \nonumber\\
=& \frac{1}{2} \Big| |\mathfrak{b}_1 -\mathfrak{c}_{11}| + |\mathfrak{b}_1 +\mathfrak{c}_{11}| \nonumber\\
&+ |\mathfrak{b}_2 -\mathfrak{c}_{22}| + |\mathfrak{b}_2 +\mathfrak{c}_{22}| \Big| \nonumber\\
\leq & \frac{1}{2} \Big[ |\mathfrak{b}_1 -\mathfrak{c}_{11}| + |\mathfrak{b}_1 +\mathfrak{c}_{11}| \nonumber\\
&+ |\mathfrak{b}_2 -\mathfrak{c}_{22}| + |\mathfrak{b}_2 +\mathfrak{c}_{22}| \Big] \nonumber\\
\leq & |\mathfrak{b}_1| + |\mathfrak{b}_2| + |\mathfrak{c}_{11}| + |\mathfrak{c}_{22}| \nonumber\\
\leq & 4c = \frac{8}{\sqrt{15}} \delta = \varepsilon
\end{align}
where the first inequality follows from $|z_1| - |z_2| \leq |z_1 - z_2| ~\forall \lbrace z_1,z_2\rbrace \in \mathbb{R}^3$, while the second and third inequalities follow from $|\sum_i z_i| \leq \sum_i |z_i| ~\forall \lbrace z_i\rbrace_i \in \mathbb{R}^3$ and the constant, $\varepsilon>0$. This shows that $I_2(\rho_{AB})$ is a continuous function of the density matrices in $\mathbb{C}^2 \otimes \mathbb{C}^2$.
\begin{lemma}
$\mathcal{S}$ is a closed set.
\end{lemma}
The continuous function $I_2(\rho_{AB})$ is closed within $[0,2]$ for all bipartite qubit states where the lower bound occurs for maximally mixed state and the upper bound occurs for Bell states. By taking $\mathcal{S}$ under consideration, we have $I_2(\rho_{AB}) \in [0,\sqrt{2}]$. Therefore, we can write that $\mathcal{S}=\lbrace\rho_{AB}: I_2(\rho_{AB})\leq \sqrt{2}\rbrace = \lbrace I_2^{-1}([0,\sqrt{2}]) \rbrace$ which characterizes $\mathcal{S}$ as a closed set. 

As the set, $\mathcal{S}$ is both bounded and closed, it can be termed as a compact set. This completes the proof of the theorem.
\end{proof}

Now, according to the hyperplane separation theorem ~\citep{Holmes,Rudin}, if a set $\mathcal{S}$ is both convex and compact, then there exists a hyperplane between a state $\varrho_{AB}\notin \mathcal{S}$ and $\mathcal{S}=\lbrace\rho_{AB}: I_2(\rho_{AB})\leq \sqrt{2}\rbrace$. This characterizes $\mathcal{S}$ as a set of free states which are not useful as nonlocal resource for steering imaginarity at Bob's side. As the states useful for steering imaginarity can always be separated from the states not useful for the same, a witness operator may be constructed to demonstrate the steering of imaginarity in 2-measurement basis by achieving the values of $I_2(\rho_{AB})$ higher than $\sqrt{2}$. Thus a state, $\varrho_{AB}$ can be found which is witnessed by such an operator, thus allowing to  witness the imaginarity steering criterion.

\subsection{Construction of a Witness Operator}

By using Eq.(\ref{IGen}), a Hermitian operator $\widetilde{W}$ can always be constructed which obeys the conditions, (i) $\operatorname{Tr}[\widetilde{W} \rho_{AB}] \geq 0$ for the states $\rho_{AB}$ which satisfy inequality(\ref{ISI1}), and (ii) $\operatorname{Tr}[\widetilde{W} \rho_{AB}] < 0$ for at least one state $\rho_{AB}$ which violates inequality(\ref{ISI1}) and hence certifies quantum imaginarity as a nonlocal resource. The observable, $\widetilde{W}$ can be called as a witness for steering imaginarity. We propose the construction of $\widetilde{W}$ depending upon four conditions on real parametric space, as given below.
\begin{itemize}
\item[(i)] $n_1 \geq t_{11}$ \& $n_2 \geq t_{22}$,
\item[(ii)] $n_1 \geq t_{11}$ \& $n_2 \leq t_{22}$,
\item[(iii)] $n_1 \leq t_{11}$ \& $n_2 \geq t_{22}$ and 
\item[(iv)] $n_1 \leq t_{11}$ \& $n_2 \leq t_{22}$
\end{itemize} 
Eq.(\ref{IGen}) can have the following forms depending upon the aforementioned conditions as follows:
\begin{align*}
I_2(\rho_{AB}) =& \lbrace n_1, -t_{11} \rbrace + \lbrace n_2, -t_{22} \rbrace, ~~~...(\text{cond.(i)}) \\
=& \lbrace n_1, -t_{11} \rbrace + \lbrace -n_2, t_{22} \rbrace, ~~~...(\text{cond.(ii)}) \\
=& \lbrace -n_1, t_{11} \rbrace + \lbrace n_2, -t_{22} \rbrace, ~~~...(\text{cond.(iii)}) \\
=& \lbrace -n_1, t_{11} \rbrace + \lbrace -n_2, t_{22} \rbrace. ~~~...(\text{cond.(iv)}) \\
\end{align*}

In general, $I_2(\rho_{AB})$ can be expressed in terms of the witness operators as 
\begin{equation}
I_2(\rho_{AB}) = \begin{cases} \pm n_1 \pm n_2 = \operatorname{Tr}[W_{i,j}^1 ~\rho_{AB}] \\
\pm n_1 \pm t_{22} = \operatorname{Tr}[W_{i,j}^2 ~\rho_{AB}] \\
\pm t_{11} \pm n_2 = \operatorname{Tr}[W_{i,j}^3 ~\rho_{AB}] \\
\pm t_{11} \pm t_{22} = \operatorname{Tr}[W_{i,j}^4 ~\rho_{AB}]\\
\end{cases}, `\forall i,j \in \lbrace 0,1 \rbrace
\end{equation}
where $W_{i,j}^k$ ($i,j \in \lbrace 0,1 \rbrace, k \in \lbrace 1,2,3,4 \rbrace$) is a Hermitian operator yielding real expectation values and can be written as
\begin{eqnarray}
W_{i,j}^1 = (-1)^i \openone_2 \otimes \sigma_x + (-1)^j \openone_2 \otimes \sigma_y \\
W_{i,j}^2 = (-1)^i \openone_2 \otimes \sigma_x + (-1)^j \sigma_y \otimes \sigma_y \\
W_{i,j}^3 = (-1)^i \sigma_x \otimes \sigma_x + (-1)^j \openone_2 \otimes \sigma_y \\
W_{i,j}^4 = (-1)^i \sigma_x \otimes \sigma_x + (-1)^j \sigma_y \otimes \sigma_y 
\end{eqnarray}
It can be easily checked that $I_2(\rho_{AB}) \geq 0$. By applying inequality(\ref{ISI1}),  $I_2(\rho_{AB}) \leq \sqrt{2}$ implying $\operatorname{Tr}[\widetilde{W}_{i,j}^k ~\rho_{AB}] \geq 0$, where
\begin{align}
\widetilde{W}_{i,j}^k = \sqrt{2} ~\openone_4 - W_{i,j}^k, ~\forall & i,j \in \lbrace 0,1 \rbrace, \nonumber\\
& k \in \lbrace 1,2,3,4 \rbrace
\label{Witnessop}
\end{align}
 $\rho_{AB} \in \mathcal{S}$  is not useful for steering imaginarity from Alice to Bob. Though the aforesaid form of the witness operator is unnormalised, it can be normalised as $\frac{1}{4\sqrt{2}}\widetilde{W}_{i,j}^k$, so that $\operatorname{Tr}[\frac{1}{4\sqrt{2}} \widetilde{W}_{i,j}^k] = 1$. We  consider the unnormalised form of witness for the rest of the paper without loss of generality.
The violation of inequality(\ref{ISI1}), i.e., $I_2(\rho_{AB}) > \sqrt{2}$ is indicated by $\operatorname{Tr}[\widetilde{W}_{i,j}^k ~\rho_{AB}] < 0$ for at least one $\rho_{AB} \notin \mathcal{S}$, which manifests quantum steerability of imaginarity from Alice to Bob. Therefore, $\widetilde{W}_{i,j}^k$ acts as a bonafide witness operator for steering imaginarity. 

\subsection{Illustrations}

To show the usefulness of our proposed witness for steering of imaginarity, we display examples by considering the following class of bipartite qubit states.

\subsubsection*{Example 1 : Werner state} 

The single parameter bipartite Werner state~\citep{Werner} is the simplest testing ground for investigating the competence of our proposed witness to detect the steerability of quantum imaginarity from Alice to Bob by using the visibility or mixedness of the state. The general bipartite qubit state given by Eq.(\ref{gen}) can be reduced to the class of Werner states($\rho_{AB}^{\text{W}}$) when $m_1=m_2=m_3=n_1=n_2=n_3=t_{12}=t_{13}=t_{21}=t_{23}=t_{31}=t_{32}=0$ and $t_{11}=t_{22}=t_{33}=-v$ where $v$ ($0\leq v \leq 1$) is called the visibility parameter. The Werner state has the following representation,
\begin{equation}
\rho_{AB}^{\text{W}} = v ~|\psi^-\rangle\langle \psi^-| + \frac{1-v}{4} \openone_2 \otimes \openone_2
\label{Werner}
\end{equation}
where, the singlet state, $|\psi^-\rangle = \frac{1}{\sqrt{2}} (|01\rangle - |10\rangle)$. $\lbrace |0\rangle, |1\rangle \rbrace$ forms the computational basis. Using the Werner state, we find that, $p(\rho_{a_2|x}) = p(\rho_{a_1|y}) = \frac{1}{2}$ and $\mathscr{I}_R^x(\rho_{a_1|y}) = \mathscr{I}_R^y(\rho_{a_2|x}) = |v| ~\forall a_1,a_2$. Hence, the left hand side of inequality (\ref{ISI1}) becomes,
\begin{equation}
I_2(\rho_{AB}^{\text{W}}) = 2v
\end{equation}
It violates the classical upper bound $\sqrt{2}$, when $\frac{1}{\sqrt{2}} < v \leq 1$. It follows that the quantum mechanical maximum of $I_2(\rho_{AB}^{\text{W}})$ attains the value of 2 for the singlet state with $v=1$ and this is also the algebraic maximum of $I_2(\rho_{AB}^{\text{W}})$. As all the Bell states are equivalent upto local unitary transformations, hence all the maximally entangled bipartite pure states can attain the maximum violation of ISI in the 2-measurement scenario.

We can identify the parametric regime for the class of Werner states as $0=n_1\geq t_{11}$ and $0=n_2\geq t_{22}$ for all $v\in [0,1]$. Hence the appropriate Witness operator for $\rho_{AB}^{\text{W}}$ can be recognised as
\begin{equation}
\widetilde{W}_{1,1}^4 = \sqrt{2} ~\openone_4 + \sigma_x \otimes \sigma_x + \sigma_y \otimes \sigma_y
\label{WitWer}
\end{equation}
It can be easily derived that, $\operatorname{Tr}[\widetilde{W}_{1,1}^4 ~\rho_{AB}^{\text{W}}] = \sqrt{2} - 2v$, which is positive semi-definite when $v \leq \frac{1}{\sqrt{2}}$, whereas negative when $v > \frac{1}{\sqrt{2}}$ representing the Werner states which manifest the steerability of imaginarity from Alice to Bob. 

\subsubsection*{Example 2: X-state}

Two-qubit X-state has been used in the literature to study quantum correlations~\citep{Yu2007,Rau2009,Ali2010,Kelleher2021} as it encompasses a large class of entangled and separable states with just 7  real parameters. By spanning over the computational basis, $\lbrace |00\rangle, |01\rangle, |10\rangle, |11\rangle \rbrace$, the density matrix of X-state takes the following form,
\begin{equation}
\rho_{AB}^{X} = \begin{pmatrix}
\rho_{11} & 0 & 0 & \rho_{14} \\
0 & \rho_{22} & \rho_{23} & 0 \\
0 & \rho_{32} & \rho_{33} & 0 \\
\rho_{41} & 0 & 0 & \rho_{44}
\end{pmatrix}
\label{XState}
\end{equation} 
where, $\sum_{i=1}^4 \rho_{ii} = 1, \rho_{22}~\rho_{33} \geq |\rho_{23}|^2$ and $\rho_{11}~\rho_{44} \geq |\rho_{14}|^2$ in order to satisfy the positive trace class condition of the density matrix. The real diagonal and complex anti-diagonal elements of Eq.(\ref{XState}) symmetrically resembles the letter "X" and thus depends on 7 independent real state parameters which are all useful for the detection of entanglement and discord~\citep{Ali2010}. Such representation in terms of Eq.(\ref{gen}) may be re-written as,
\begin{align}
\rho_{AB}^{X} =& \frac{1}{4} [\openone_2 \otimes \openone_2 + \beta_{z0}~ \sigma_z \otimes \openone_2 + \beta_{0z}~ \openone_2 \otimes \sigma_z + \beta_{xx}~ \sigma_x \otimes \sigma_x + \nonumber\\
& \beta_{xy}~ \sigma_x \otimes \sigma_y + \beta_{yx}~ \sigma_y \otimes \sigma_x + \beta_{yy}~ \sigma_y \otimes \sigma_y + \beta_{zz}~ \sigma_z \otimes \sigma_z]
\label{GenX}
\end{align} 
where $\rho_{11}= \frac{1}{4}(1+\beta_{z0}+\beta_{0z}+\beta_{zz})$, $\rho_{22}= \frac{1}{4}(1+\beta_{z0}-\beta_{0z}-\beta_{zz})$, $\rho_{33}= \frac{1}{4}(1-\beta_{z0}+\beta_{0z}-\beta_{zz})$, $\rho_{14}= \rho_{41}^{*} = \frac{1}{4} (\beta_{xx}-\beta_{yy}) - \frac{i}{4} (\beta_{xy}+\beta_{yx})$ and $\rho_{23} = \rho_{32}^{*} = \frac{1}{4} (\beta_{xx}+\beta_{yy}) + \frac{i}{4} (\beta_{xy}-\beta_{yx})$ mark the presence of 7 real parameters in X-state~\citep{Kelleher2021}. By comparing with Eq.(\ref{gen}), we have $m_1=m_2=n_1=n_2=t_{13}=t_{31}=t_{23}=t_{32}=t_{33}=0, m_3=\beta_{z0}, n_3=\beta_{0z}, t_{11}=\beta_{xx}, t_{12}=\beta_{xy}, t_{21}=\beta_{yx}$ and $t_{22}=\beta_{yy}$.

Using the two-qubit X-state, we have, $p(\rho_{a_2|x}) = p(\rho_{a_1|y}) = \frac{1}{2}$, $\mathscr{I}_R^x(\rho_{a_1|y}) = |\beta_{yy}| ~\forall a_1$ and $\mathscr{I}_R^y(\rho_{a_2|x}) = |\beta_{xx}| ~\forall a_2$. We can evaluate the left hand side of inequality (\ref{ISI1}), given by,
\begin{equation}
I_2(\rho_{AB}^X) = |\beta_{xx}| + |\beta_{yy}|
\label{IX}
\end{equation}
which violates the unsteerability bound $\sqrt{2}$ in the green shaded region depicted in Fig.\ref{ExX}. It follows that the quantum mechanical maximum of $I_2(\rho_{AB}^X)$ attains the  algebraic maximum of 2 for $\beta_{xx}=\beta_{yy}=1$ in the 2-measurement scenario. Bell states can thus achieve optimality in terms of the quantum mechanical violation of ISI. Notably, it can be recognised that only 2 real parameters of the 7 parameter X-state are adequate to reveal the steerability through ISI.

\begin{figure}[!ht]
\includegraphics[width=6.5cm]{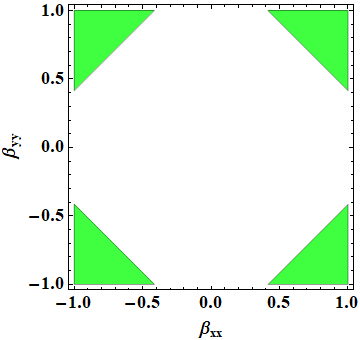}
\caption{\footnotesize (Color Online) The green coloured region in the parametric space of $\lbrace \beta_{xx},\beta_{yy} \rbrace \in [-1,1]$ indicates the violation of inequality(\ref{ISI1}).}
\label{ExX}
\end{figure}

To construct the witness operator for two-qubit X-state, we consider the following cases:
\begin{itemize}
\item[(i)] $\beta_{xx}\leq 0 ~\&~ \beta_{yy}\leq 0$ which is equivalent to $0=n_1\geq t_{11}$ and $0=n_2\geq t_{22}$ as per Eq.(\ref{gen}). Hence we obtain $I_2(\rho_{AB}^X) = -t_{11} -t_{22}$. This gives suitable witness operator as 
\begin{equation}
\widetilde{W}_{1,1}^4 = \sqrt{2} ~\openone_4 + \sigma_x \otimes \sigma_x + \sigma_y \otimes \sigma_y.
\label{WitX1}
\end{equation}
\item[(ii)] $\beta_{xx}\leq 0 ~\&~ \beta_{yy}\geq 0$ which is analogous to $0=n_1\geq t_{11}$ and $0=n_2\leq t_{22}$. By noting $I_2(\rho_{AB}^X) = -t_{11} +t_{22}$, the suitable witness operator can be given as
\begin{equation}
\widetilde{W}_{1,0}^4 = \sqrt{2} ~\openone_4 + \sigma_x \otimes \sigma_x - \sigma_y \otimes \sigma_y.
\label{WitX2}
\end{equation}
\item[(iii)] $\beta_{xx}\geq 0 ~\&~ \beta_{yy}\leq 0$ which is equivalent to $0=n_1\leq t_{11}$ and $0=n_2\geq t_{22}$. We can have $I_2(\rho_{AB}^X) = t_{11} -t_{22}$ giving rise to the suitable witness operator as 
\begin{equation}
\widetilde{W}_{0,1}^4 = \sqrt{2} ~\openone_4 - \sigma_x \otimes \sigma_x + \sigma_y \otimes \sigma_y.
\label{WitX3}
\end{equation}
\item[(iv)] $\beta_{xx}\geq 0 ~\&~ \beta_{yy}\geq 0$ or equivalently, $0=n_1\leq t_{11}$ and $0=n_2\leq t_{22}$. Hence by iterating $I_2(\rho_{AB}^X) = t_{11} +t_{22}$, the witness operator can be suitably written as
\begin{equation}
\widetilde{W}_{0,0}^4 = \sqrt{2} ~\openone_4 - \sigma_x \otimes \sigma_x - \sigma_y \otimes \sigma_y.
\label{WitX2}
\end{equation}
\end{itemize}
We can, in general, express the witness operator as $\widetilde{W}_{i,j}^4$ where $i,j\in \lbrace 0,1 \rbrace$ which can be applied to determine the value of $\operatorname{Tr}[\widetilde{W}_{i,j}^4 ~\rho_{AB}^{X}]$. The negative value is obtained in the green shaded region of the plot given by Fig.\ref{ExX}. Otherwise, it gives positive semi-definite value which is not useful for steering imaginarity from Alice to Bob under 2-measurement criterion. Therefore, our witness operator can successfully demonstrate nonlocality through the violation of inequality(\ref{ISI1}) for the two-qubit X-state. 

There exists a class of real two-qubit X-states~\citep{Munro} for which the amount of entanglement can not be increased by any unitary transformation for a given linear entropy~\citep{Bose}. These states can be called maximally entangled mixed states(MEMS). This state can be represented~\citep{Ishizaka,Vers} by its concurrence $C$ ($0\leq C \leq 1)$~\citep{Coffman} as, 
\begin{equation}
\rho_{AB}^{\text{MEMS}} = \begin{pmatrix}
\mathfrak{h}(C) & 0 & 0 & \frac{C}{2} \\
0 & 1-2\mathfrak{h}(C) & 0 & 0 \\
0 & 0 & 0 & 0 \\
\frac{C}{2} & 0 & 0 & \mathfrak{h}(C) \\
\end{pmatrix}
\label{MEMS}
\end{equation}
where $\mathfrak{h}(C) = \frac{1}{3}$ when $0 \leq C < \frac{2}{3}$ and $\mathfrak{h}(C) = \frac{C}{2}$ when $\frac{2}{3} \leq C \leq 1$.  By comparing Eq.(\ref{MEMS}) with Eq.(\ref{GenX}), we have $\beta_{z0}=-\beta_{0z}=1-2\mathfrak{h}(C), \beta_{xy}=\beta_{yx}=0, \beta_{xx}=-\beta_{yy}=C$ and $\beta_{zz}=4\mathfrak{h}(C)-1$. By applying Eq.(\ref{IX}), we obtain the left hand side of inequality(\ref{ISI1}) for MEMS as
\begin{equation}
I_2(\rho_{AB}^{\text{MEMS}}) = 2C
\end{equation}
which can optimally detect steerability for $C>\frac{1}{\sqrt{2}}$, as is 
also obtained from the  CFFW inequality given in~\citep{CFFW}. As $\beta_{xx}\geq 0$ and $\beta_{yy}\leq 0$ which corresponds to the case (iii)   above, we can  find the suitable witness operator for steering imaginarity from Alice to Bob to be
\begin{equation}
\widetilde{W}_{0,1}^4 = \sqrt{2} ~\openone_4 - \sigma_x \otimes \sigma_x + \sigma_y \otimes \sigma_y.
\label{WitMEMS}
\end{equation}
which satisfies $\operatorname{Tr}[\widetilde{W}_{0,1}^4 ~\rho_{AB}^{\text{MEMS}}]<0$ when $C>\frac{1}{\sqrt{2}}$.

\subsection{Implementation of the Witness Operator}

The demonstration of steerability of imaginarity from Alice to Bob by using the method of witness operator provides a realizable framework in lab with two measurements per side. Notably, recent works show that imaginarity of a qubit can be revealed with various applications in information processing tasks~\citep{Fernandes2024,Zhang2025}. Experimental measurability of witness operator in the two-qubit scenario may be facilitated by the decomposition of it into the sum of projectors on product subspace~\citep{Guhne2002,Mohamed2004}, i.e. $\widetilde{W} = \sum_{i=1}^n \gamma_i ~P_i \otimes P'_i$ where the co-efficients $\gamma_i$ satisfy $\sum_i \gamma_i =1$ and $P_i$, $P'_i$ are the projector in the subspaces of Alice's subsystem and Bob's subsystem, respectively. In an optimal decomposition, the number of local projectors is minimized over all possible decompositions, and the minimum number of non-vanishing $\gamma_i$ is called as the optimal number of local projectors, i.e. $\min_{\text{decomp}} n$~\citep{Guhne2002}. 

To experimentally implement the witness operators given by Eq.(\ref{Witnessop}) for two spin-$\frac{1}{2}$ particles, we have to consider projective spin measurements in $x-$ and $y-$ basis at either side of $\rho_{AB}$ respectively. Hence only two numbers of local spin-measurement settings among three mutually orthogonal unbiased basis suffice in $\mathbb{C}^2$. Thus lower number of local measurements is required to witness steering of imaginarity from Alice to Bob than that required for entanglement, quantum teleportation or nonlocal correlations (i.e. 3)~\citep{Barbieri2003,Ganguly2011,Hyllus}. 

We write the local projectors in $x-$basis as $\lbrace|0_x\rangle \equiv \frac{|0\rangle + |1\rangle}{\sqrt{2}}, |1_x\rangle \equiv \frac{|0\rangle - |1\rangle}{\sqrt{2}}\rbrace$ and in $y-$basis as $\lbrace|0_y\rangle \equiv \frac{|0\rangle + i|1\rangle}{\sqrt{2}}, |1_y\rangle \equiv \frac{|0\rangle - i|1\rangle}{\sqrt{2}}\rbrace$ respectively where  $\lbrace |0\rangle, |1\rangle \rbrace$ form the projectors in $z-$basis. By using the completeness relation, $|0\rangle\langle 0| + |1\rangle\langle 1| = |0_x\rangle\langle 0_x| + |1_x\rangle\langle 1_x| = |0_y\rangle\langle 0_y| + |1_y\rangle\langle 1_y|=\openone_2$, Eq.(\ref{Witnessop}) can be recast as
\begin{widetext}
\begin{align}
\widetilde{W}_{i,j}^1 =& \nu_1 (|0_x 0_x\rangle\langle 0_x 0_x| + |1_x 0_x\rangle\langle 1_x 0_x|) + \nu_2 (|0_x 1_x\rangle\langle 0_x 1_x| + |1_x 1_x\rangle\langle 1_x 1_x|) \nonumber\\
&+ \nu_3 (|0_y 0_y\rangle\langle 0_y 0_y| + |1_y 0_y\rangle\langle 1_y 0_y|) + \nu_4 (|0_y 1_y\rangle\langle 0_y 1_y| + |1_y 1_y\rangle\langle 1_y 1_y|), \\
\widetilde{W}_{i,j}^2 =& \nu_1 (|0_x 0_x\rangle\langle 0_x 0_x| + |1_x 0_x\rangle\langle 1_x 0_x|) + \nu_2 (|0_x 1_x\rangle\langle 0_x 1_x| + |1_x 1_x\rangle\langle 1_x 1_x|) \nonumber\\
&+ \nu_3 (|0_y 0_y\rangle\langle 0_y 0_y| + |1_y 1_y\rangle\langle 1_y 1_y|) + \nu_4 (|1_y 0_y\rangle\langle 1_y 0_y| + |0_y 1_y\rangle\langle 0_y 1_y|),\\
\widetilde{W}_{i,j}^3 =& \nu_1 (|0_x 0_x\rangle\langle 0_x 0_x| + |1_x 1_x \rangle\langle 1_x 1_x|) + \nu_2 (|1_x 0_x\rangle\langle 1_x 0_x| + |0_x 1_x\rangle\langle 0_x 1_x|) \nonumber\\
&+ \nu_3 (|0_y 0_y\rangle\langle 0_y 0_y| + |1_y 0_y\rangle\langle 1_y 0_y|) + \nu_4 (|0_y 1_y\rangle\langle 0_y 1_y| + |1_y 1_y\rangle\langle 1_y 1_y|), \\
\widetilde{W}_{i,j}^4 =& \nu_1 (|0_x 0_x\rangle\langle 0_x 0_x| + |1_x 1_x \rangle\langle 1_x 1_x|) + \nu_2 (|1_x 0_x\rangle\langle 1_x 0_x| + |0_x 1_x\rangle\langle 0_x 1_x|) \nonumber\\
&+ \nu_3 (|0_y 0_y\rangle\langle 0_y 0_y| + |1_y 1_y\rangle\langle 1_y 1_y|) + \nu_4 (|1_y 0_y\rangle\langle 1_y 0_y| + |0_y 1_y\rangle\langle 0_y 1_y|),
\end{align}
\end{widetext}
where, $\nu_1=\sqrt{2}+(-1)^{i+1}$, $\nu_2=\sqrt{2}+(-1)^{i}$, $\nu_3=(-1)^{j+1}$ and $\nu_4=(-1)^{j}$ are non-vanishing co-efficients. These witness operators have asymmetric representation. In a given experimental scenario, the sign of the expectation value of $\widetilde{W}_{i,j}^k$ ($i,j\in\lbrace 0,1\rbrace, k\in\lbrace 1,2,3,4 \rbrace$), i.e. $\langle \widetilde{W}_{i,j}^k \rangle_{\rho_{AB}}$ determines the presence of nonlocality in a bipartite correlation in terms of the steering of imaginarity. Here the optimal number of local projectors is $\min_{\text{decomp}} n = 8$ which is lower than the number of copies of bipartite qubit states required to reveal the presence of quantum mechanical correlation or the usefulness of it for performing an information processing task with non-classical advantage. The lower bound of the expectation value of the witness operators may be adjusted to a positive constant to demonstrate ISI under inevitable mixing of noise in practical scenario. Thus minimization of error to estimate the expectation value of the witness operator is necessary to tolerate such noise.

\section{Monogamy of Steering of Imaginarity} \label{C5}

Most quantum mechanical correlations follow monogamy criterion under the no-signalling framework when  more than two parties who share such correlation are involved. A monogamy relation for entanglement among three qubits was first demonstrated in~\citep{CKW2000}. Further studies on monogamy relations in the context of Bell nonlocality~\citep{Toner2006,Toner2008,Kurz2011}, EPR-steering~\citep{Reid2013,Milne2014}, quantum teleportation~\citep{Lee2009}, contextual inequalities~\citep{Ramanathan2012,Kurz2014} have been performed and their importance in several information-theoretic applications have consequently been shown~\citep{Giorgi2011,Allegra2011,Song2013,Kumar2016,Dhar2017}. Monogamy has significant implication in quantum cryptographic networks~\citep{Gisin2002,TP2014,Datta2017}.

Here we consider the general tripartite pure state to derive a monogamy relation for our imaginarity steering inequality in the 2-measurement set-up.

\begin{theorem}
If Alice(A), Bob(B) and Charlie(C) share a tripartite state $|\psi\rangle_{ABC}$, then the left hand side of inequality(\ref{ISI1}) corresponding to Alice$\rightarrow$Bob, i.e. $I_2(\rho_{AB})$ and that corresponding to Alice$\rightarrow$Charlie, i.e. $I_2(\rho_{AC})$ satisfy the following relation,
\begin{equation}
I_2(\rho_{AB}) + I_2(\rho_{AC}) \leq 2\sqrt{2}
\end{equation}
where $\rho_{AB}=\operatorname{Tr}_{C}[|\psi\rangle_{ABC}\langle\psi|]$ and $\rho_{AC}=\operatorname{Tr}_{B}[|\psi\rangle_{ABC}\langle\psi|]$.
\end{theorem}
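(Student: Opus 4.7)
The plan is to parametrize $|\psi\rangle_{ABC}$ in Acin's generalised Schmidt form for three-qubit pure states, explicitly evaluate the four Bloch and correlation parameters on each cut that enter $I_2$, and then bound $I_2(\rho_{AB})+I_2(\rho_{AC})$ through a short chain of inequalities (max-to-Euclidean, Cauchy--Schwarz, AM--GM) that closes exactly at $2\sqrt{2}$.

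Writing $|\psi\rangle_{ABC}=\lambda_0|000\rangle+\lambda_1 e^{i\phi}|100\rangle+\lambda_2|101\rangle+\lambda_3|110\rangle+\lambda_4|111\rangle$ with $\lambda_i\geq 0$ and $\sum_i\lambda_i^2=1$, I would partial-trace and compare to the Fano expansion of Eq.~(\ref{gen}) to read off $n_1^B = 2\lambda_1\lambda_3\cos\phi + 2\lambda_2\lambda_4$, $n_2^B = -2\lambda_1\lambda_3\sin\phi$, $t_{11}^{AB}=-t_{22}^{AB}=2\lambda_0\lambda_3$, together with the analogous expressions on the $AC$ cut obtained by exchanging $\lambda_2\leftrightarrow\lambda_3$. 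Using $|a+b|+|a-b|=2\max(|a|,|b|)$, Eq.~(\ref{IGen}) then expresses $I_2(\rho_{AB})+I_2(\rho_{AC})$ as a sum of four non-negative $\max(X_i,Y_i)$ terms with $(X_i,Y_i)$ built from the six parameters $\lambda_0,\ldots,\lambda_4,\phi$.

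The main bound is then assembled in two steps. First, applying $\max(X,Y)\leq\sqrt{X^2+Y^2}$ termwise and then Cauchy--Schwarz on the four resulting square roots yields $I_2(\rho_{AB})+I_2(\rho_{AC})\leq 2\sqrt{S}$, where $S$ is the total sum of the eight squared quantities. A direct expansion, in which the pieces $\lambda_1^2\lambda_i^2\cos^2\phi+\lambda_1^2\lambda_i^2\sin^2\phi$ collapse via $\cos^2\phi+\sin^2\phi=1$, simplifies $S$ to $4(\lambda_2^2+\lambda_3^2)(\lambda_1^2+\lambda_4^2+2\lambda_0^2)+16\lambda_1\lambda_2\lambda_3\lambda_4\cos\phi$. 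Second, using $\cos\phi\leq 1$ together with the double AM--GM bound $4\lambda_1\lambda_4\cdot\lambda_2\lambda_3\leq(\lambda_1^2+\lambda_4^2)(\lambda_2^2+\lambda_3^2)$ merges the residual quartic cross-term into the first piece, and substituting $\lambda_0^2+\lambda_1^2+\lambda_4^2=1-(\lambda_2^2+\lambda_3^2)$ collapses everything to $S\leq 8a(1-a)$ with $a:=\lambda_2^2+\lambda_3^2\in[0,1]$. Since $\max_{a\in[0,1]}8a(1-a)=2$, this establishes $S\leq 2$ and hence $I_2(\rho_{AB})+I_2(\rho_{AC})\leq 2\sqrt{2}$.

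The main obstacle I expect is the bookkeeping required to arrive at the factorised expression for $S$: eight squared quantities, several carrying cross-terms involving $\phi$, must be added and regrouped so that the trigonometric dependence reduces to a single $\cos\phi$ and the remainder is proportional to $\lambda_2^2+\lambda_3^2$. Once this form is visible, the two AM--GM applications and the one-variable maximum of $8a(1-a)$ are routine, and one checks that saturation occurs, e.g., at $\lambda_0=1/\sqrt{2},\,\lambda_2=\lambda_3=1/2,\,\lambda_1=\lambda_4=0$, where every inequality in the chain is simultaneously tight.
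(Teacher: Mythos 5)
Your proposal is correct, and I verified the key computations: the Bloch/correlation parameters you read off from the Acin form ($n_1^B=2\lambda_1\lambda_3\cos\phi+2\lambda_2\lambda_4$, $n_2^B=-2\lambda_1\lambda_3\sin\phi$, $t_{11}^{AB}=-t_{22}^{AB}=2\lambda_0\lambda_3$, with $\lambda_2\leftrightarrow\lambda_3$ for the $AC$ cut) agree with the paper's reduced matrices and its Eqs.~(\ref{ItriABx})--(\ref{ItriACy}), your expression $S=4(\lambda_2^2+\lambda_3^2)(\lambda_1^2+\lambda_4^2+2\lambda_0^2)+16\lambda_1\lambda_2\lambda_3\lambda_4\cos\phi$ is exactly what the expansion gives, and the chain $\max(X,Y)\le\sqrt{X^2+Y^2}$, Cauchy--Schwarz, AM--GM, $8a(1-a)\le 2$ closes cleanly with simultaneous saturation at $\lambda_0=1/\sqrt2$, $\lambda_2=\lambda_3=1/2$. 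The setup is the same as the paper's (same parametrization, same explicit conditional probabilities and imaginarities), but the final step differs in an important way: the paper simply states that ``the numerical maximum $2\sqrt2$ is obtained for'' those parameter values, i.e.\ it establishes the bound by optimization over the six parameters rather than by an analytic argument. Your rewriting of Eq.~(\ref{IGen}) via $|a+b|+|a-b|=2\max(|a|,|b|)$ and the subsequent inequality chain supplies a self-contained analytic proof of the $2\sqrt2$ bound that the paper leaves implicit, at the modest cost of the bookkeeping you anticipated; it also makes transparent \emph{why} the extremum sits at $\lambda_2^2+\lambda_3^2=1/2$ with $\lambda_1=\lambda_4=0$. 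This is a genuine strengthening of the paper's argument rather than a mere reproduction of it.
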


\begin{proof}
An arbitrary pure state in $\mathbb{C}^2 \otimes \mathbb{C}^2 \otimes \mathbb{C}^2$ can be considered as the union of the well-known GHZ state~\citep{GHZ1990} and the W state~\citep{Dur2000}. Up to local operations, such a state can be generalised in the computational basis ($\lbrace |0\rangle,|1\rangle \rbrace$) as~\citep{Acin2000,Acin2001}
\begin{equation}
|\psi\rangle_{ABC} = \eta_0 |000\rangle + \eta_1 \exp(i \theta) |100\rangle + \eta_2 |101\rangle + \eta_3 |110\rangle + \eta_4 |111\rangle,
\label{gentri}
\end{equation}
where $\lbrace \eta_i \rbrace_{i=0}^{4},\theta \in \mathbb{R}^3,  \sum_{i=0}^4 \eta_i^2 =1, 0\leq \eta_i \leq 1$ and $0\leq \theta \leq \pi$. The GHZ-state can be obtained from Eq.(\ref{gentri}) by putting $\eta_0 = \eta_4 = \frac{1}{\sqrt{2}}, \eta_1 = \eta_2 = \eta_3 = 0$ and the W-state  by putting $\eta_4 = \theta = 0$.

The reduced state between Alice and Bob from the tripartite joint state given by Eq.(\ref{gentri}) can be obtained by taking partial trace over Charlie's subsystem as $\rho_{AB}=\operatorname{Tr}_{C}[|\psi\rangle_{ABC}\langle\psi|]$ and the reduced state between Alice and Charlie can be expressed in a similar way as $\rho_{AC}=\operatorname{Tr}_{B}[|\psi\rangle_{ABC}\langle\psi|]$, \textit{viz.}
\begin{eqnarray}
\rho_{AB}= \begin{pmatrix}
\eta_0^2 & 0 & e^{-i\theta} \eta_0 \eta_1 & \eta_0 \eta_3 \\
0 & 0 & 0 & 0 \\
e^{i\theta} \eta_0 \eta_1 & 0 & \eta_1^2 + \eta_2^2 & e^{i\theta} \eta_1 \eta_3 + \eta_2 \eta_4 \\
\eta_0 \eta_3 & 0 & e^{-i\theta} \eta_1 \eta_3 + \eta_2 \eta_4 & 1 -\eta_0^2 - \eta_1^2 - \eta_2^2
\end{pmatrix}, \\
\rho_{AC}= \begin{pmatrix}
\eta_0^2 & 0 & e^{-i\theta} \eta_0 \eta_1 & \eta_0 \eta_2 \\
0 & 0 & 0 & 0 \\
e^{i\theta} \eta_0 \eta_1 & 0 & \eta_1^2 + \eta_3^2 & e^{i\theta} \eta_1 \eta_2 + \eta_3 \eta_4 \\
\eta_0 \eta_2 & 0 & e^{-i\theta} \eta_1 \eta_2 + \eta_3 \eta_4 & 1 -\eta_0^2 - \eta_1^2 - \eta_3^2
\end{pmatrix}.
\end{eqnarray} 
Let us suppose that Alice wants to steer the imaginarity of both the subsystems possessed by Bob and Charlie in a preferred basis of $\sigma_x$ and $\sigma_y$, by doing the measurements on her subsystem in $y-$ and $x-$ directions, respectively,  by  employing the copies of $\rho_{AB}$ and $\rho_{AC}$ separately. For both $\rho_{AB}$ and $\rho_{AC}$, we have the probabilities, $p(\pm|x)= \frac{1}{2} \pm \eta_0 \eta_1 \cos \theta$ and $p(\pm|y) = \frac{1}{2} \pm \eta_0 \eta_1 \sin \theta$ by making measurements in x- and y- directions respectively, at Alice's side along with $\pm$ as outputs. 

If Bob computes imaginarity of his local state in the y-basis depending upon the measurement done at Alice's side in the x-direction by on the state $\rho_{AB}$, then 
\begin{equation}
\mathscr{I}_R^y(\rho^{B}_{\pm|x}) = 2\big|\frac{\eta_2 \eta_4 + \eta_3(\eta_1 \cos \theta \pm \eta_0)}{1 \pm 2 \eta_0 \eta_1 \cos \theta}\big|
\label{ItriABx}
\end{equation}
with respect to the outcomes $\pm$ at Alice's side.
Similarly, the imaginarity of Bob's subsystem in x-basis after Alice's measurement in y-direction with the use of $\rho_{AB}$ can be written as
\begin{equation}
\mathscr{I}_R^x(\rho^{B}_{\pm|y}) = 2\big|\frac{\eta_3(\eta_1 \sin \theta \pm \eta_0)}{1 \pm 2 \eta_0 \eta_1 \sin \theta}\big|
\label{ItriABy}
\end{equation}
with respect to the outcomes $\pm$ at Alice's side.

In a similar fashion, Charlie computes imaginarity in $\lbrace y,x \rbrace$-basis contingent upon Alice's measurement settings in the direction of $\lbrace \sigma_x, \sigma_y \rbrace$ respectively, pertaining to $\rho_{AC}$. Hence, we get 
\begin{equation}
\mathscr{I}_R^y(\rho^{C}_{\pm|x}) = 2\big|\frac{\eta_3 \eta_4 + \eta_2(\eta_1 \cos \theta \pm \eta_0)}{1 \pm 2 \eta_0 \eta_1 \cos \theta}\big|
\label{ItriACx}
\end{equation}
and
\begin{equation}
\mathscr{I}_R^x(\rho^{C}_{\pm|y}) = 2\big|\frac{\eta_2(\eta_1 \sin \theta \pm \eta_0)}{1 \pm 2 \eta_0 \eta_1 \sin \theta}\big|
\label{ItriACy}
\end{equation}
analogous to the results $\pm$ obtained by Alice.

Now we make use of Eqs.(\ref{ItriABx}-\ref{ItriACy}) in inequalities(\ref{ISI1}) to obtain
\begin{align}
&I_2(\rho_{AB}) + I_2(\rho_{AC}) \nonumber\\
&= \Big(\sum_{a_1=+,-} p(a_1|y) ~\mathscr{I}_R^x(\rho^B_{a_1|y}) + \sum_{a_2=+,-} p(a_2|x) ~\mathscr{I}_R^y(\rho^B_{a_2|x})\Big) \nonumber\\
&+ \Big(\sum_{a_1=+,-} p(a_1|y) ~\mathscr{I}_R^x(\rho^C_{a_1|y}) + \sum_{a_2=+,-} p(a_2|x) ~\mathscr{I}_R^y(\rho^C_{a_2|x})\Big) \nonumber\\
&\leq 2\sqrt{2}
\label{monotri}
\end{align}
where the numerical maximum $2\sqrt{2}$ is obtained for $\eta_0=\frac{1}{\sqrt{2}}, \eta_1=\eta_4=0, \eta_2=\eta_3=\frac{1}{2}$ ($\theta$ arbitrary). We can also check that, $I_2(\rho_{AB}) = \sum_{a_1=+,-} p(a_1|y) ~\mathscr{I}_R^x(\rho^B_{a_1|y}) + \sum_{a_2=+,-} p(a_2|x) ~\mathscr{I}_R^y(\rho^B_{a_2|x}) \leq 2$ by using $\rho_{AB}$ where the upper bound corresponds to $\eta_0=\eta_3=\frac{1}{\sqrt{2}}$ and $\eta_1=\eta_2=\eta_4=0$. On the other hand,  $I_2(\rho_{AC}) = \sum_{a_1=+,-} p(a_1|y) ~\mathscr{I}_R^x(\rho^C_{a_1|y}) + \sum_{a_2=+,-} p(a_2|x) ~\mathscr{I}_R^y(\rho^C_{a_2|x}) \leq 2$ by utilizing $\rho_{AC}$ where the upper bound is achieved for $\eta_0=\eta_2=\frac{1}{\sqrt{2}}$ and $\eta_1=\eta_3=\eta_4=0$. This implies that Alice can maximally steer the imaginarity of either the subsystems of Bob or Charlie with quantum advantage when the rest remains uncorrelated or classically correlated with Alice. From inequality(\ref{monotri}) it is evident that, $\rho_{AC}$ must satisfy inequality(\ref{ISI1}) when $\rho_{AB}$ violates it or vice-versa.  It means that $\rho_{AB}$ and $\rho_{AC}$ can not simultaneously have quantum advantage of the correlation through the violation of inequality(\ref{ISI1}). Or in other words, when Alice steers the imaginarity of local state possessed by Bob, then she can not steer that of Charlie in the tripartite scenario. 
\end{proof}

\section{Efficacy of imaginarity steering inequality} \label{C6}

The imaginarity steering criterion (ISI) given by inequality (\ref{ISI1}) is a 2-measurement criterion which demonstrates steerable correlations based on partial knowledge of Bob's local hidden state (LHS). There exist other steering criteria utilizing partial information about Bob's LHS, such as the nonlocal advantage of quantum coherence (NAQC)~\citep{Mondal}, and nonlocal advantage of quantum imaginarity (NAQI)~\citep{Wei2024} which define quantum steerable correlations using  three measurements per side. Our imagniarity steering criterion is distinct from both NAQC and NAQI under the steering framework. In this section, we compare our criterion with the aforesaid steering criteria to show its robustness over the others.

Coherence of a quantum state provides  partial information about the density matrix through only the off-diagonal terms of it. Quantum coherence is a basis dependent property of quantum state and can quantified by various  measures such as, $l_1$-norm ($C^{l_1}$), relative entropy ($C^{R}$) and skew information ($C^{S}$)~\citep{Baumgratz,Girolami,Winter}. For a $2\times 2$ density matrix $\varrho$, $C^{l_1}(\varrho) = \sum_{\substack{i,j \\ i \neq j}} |\varrho_{ij}|$, $C^{R}(\varrho) = S(\varrho_d) - S(\varrho)$, where $S(\varrho)$ is the Von-Neumann entropy of $\varrho$~\citep{Neumann} and the diagonal matrix $\varrho_d$ is formed by the diagonal elements of $\varrho$ in a particular basis and $C^{S}(\varrho) = \operatorname{Tr}[\varrho.\sigma_i.\sigma_i - \sqrt{\varrho}.\sigma_i.\sqrt{\varrho}.\sigma_i]$ in the basis representation of $\sigma_i$. Using the above quantifiers, complementarity relations among three mutually orthogonal basis in 2-dimensions can be constructed as $\sum_{j=x,y,z} C^{g}_j (\varrho) \leq \gamma^g$, where the upper limit $\gamma^g=\lbrace \sqrt{6}, 2.23, 2 \rbrace$ corresponds to $g = \lbrace l_1, R, S \rbrace$, respectively~\citep{Mondal}.

Depending upon the above complementarity relations, sufficient criteria for steering of quantum coherence in a scenario of 3-dichotomic measurements per side can be formulated as~\citep{Mondal},
\begin{equation}
N_3^g = \frac{1}{2} \sum_{i,j,a} p(a|j\neq i) ~C_i^{g} (\rho_{B|\Pi_{a|j\neq i}}) \leq \gamma^g
\label{NAQC}
\end{equation}
where, $i,j = \lbrace x, y, z \rbrace$, $a = \lbrace 0,1 \rbrace$, $p(a|j\neq i) = \operatorname{Tr}[(\Pi_{a|j\neq i} \otimes \openone_2) \rho_{AB}]$ for bipartite state $\rho_{AB}$ between Alice and Bob and $\rho_{B|\Pi_{a|j\neq i}} = \frac{1}{p(a|j\neq i)} \operatorname{Tr}_A [(\Pi_{a|j\neq i} \otimes \openone_2) \rho_{AB}]$. The violation of inequalities (\ref{NAQC}) implies the steering of local quantum coherence of Bob by Alice and thus, it gives rise to nonlocal advantage of quantum coherence(NAQC).

Another 3-measure steering inequality has been proposed recently by using $l_1$-norm and relative entropy of imaginarity by Wei et al~\citep{Wei2024}, which is given by
\begin{equation}
\overline{N}_3^{g'} = \max_{M,\Pi} \sum_{i,a} p(a|i) ~\mathscr{I}_{M_i}^{g'}(\rho_{B|\Pi_{a|i}}) \leq \gamma^{g'}
\label{NAQI}
\end{equation}
where, $\Pi$ is the set of measurements done at Alice's side, $M$ is the set of Maximally Unbiased Basis (MUBs) chosen at Bob's side, $\mathscr{I}_{M_i}^{g'}(\rho_{B|\Pi_{a|i}})$ is the $g'$-quantifier of imaginarity of Bob's conditional state depending upon Alice's measurement in $i-$direction producing outcome $a$ and $\gamma^{g'}$ is the upper bound corresponding to the $g'$-quantifier. The value of $\gamma^{g'}=\lbrace \sqrt{5}, 2.02685 \rbrace$ corresponding to $g'=\lbrace l_1, R \rbrace$, i.e., the $l_1$-norm and relative entropy measures of imaginarity, respectively. The $l_1$-norm of imaginarity for a qubit $\rho$ is expressed as $\mathscr{I}^{l_1}(\rho)= \sum_{a,b\neq a} |\text{Im}(\rho_{a,b})|$ depending upon the off-diagonal terms $\rho_{a,b} = \langle a|\rho|b \rangle$ in a fixed basis~\citep{Xue2021}. The relative entropy of imaginarity for $\rho$ is given by, $\mathscr{I}^{R}(\rho)= \operatorname{Tr}[\rho \log_2 \rho] - \operatorname{Tr}[\rho' \log_2 \rho']$ where $\rho'=\frac{\rho+\rho^T}{2}$ with $\rho^T$ as the transpose of $\rho$ in a fixed basis~\citep{Xue2021,Chen2023}. The violation of inequalities (\ref{NAQI}) implies the nonlocal advantage of quantum imaginarity(NAQI). Like NAQC, the $l_1$-norm measure ascertains the optimality of the violation.

\subsection{Robustness under mixing with white noise}

The Bell states are the maximally resourceful bipartite qubit states in terms all quantum correlation~\citep{NC2010}. If one of such states, say the singlet state, is mixed with white noise with a ratio $v:(1-v)$, then there exists a non-classical domain, upto which white noise can be tolerated to preserve the quantum correlation. The well-known Werner class of states are described by Eq.(\ref{Werner})~\citep{Werner}. If we compare the maximal tolerable range of $v$ among ISI, NAQC~\citep{Mondal} and NAQI~\citep{Wei2024}, we can observe from Table \ref{tab1} that, 
\begin{table}[h!]
\centering
 \begin{tabular}{| c | c | c |} 
 \hline
 \multicolumn{3}{|c|}{Range of v for the violation of} \\
 \hline
 ISI & NAQC ineq. & NAQI ineq. \\
 \hline
 $v> 0.707$ & $v> 0.815$ & $v> 0.745$ \\
 \hline
 \end{tabular}
\caption{\footnotesize ISI can detect the steerability of Werner state in a larger parametric domain than NAQC and NAQI. The same bounds are obtained
if Alice performs unsharp measurements with parameter $v$.}
\label{tab1}
\end{table}
ISI captures nonlocality of Werner states by tolerating the maximum amount of noise,  as also manifested identically by Bell nonlocality~\citep{Bell,CHSH,Brunner2014} and quantum steering~\citep{CFFW,Cavalcanti2017}. Hence there exists a range of $v$ where both NAQC and NAQI fail to detect steerability with the aid of partial knowledge of Bob's LHS unlike ISI given by Eq.(\ref{ISI1}) proposed by us. This shows the efficacy of ISI over NAQC and NAQI under mixing with white noise. 

\subsection{Robustness under unsharp measurements}

If Alice performs unsharp measurement on her side instead of projective measurements, then the measurement correlation pertaining to a given bipartite qubit state also changes. Unsharp measurement is a single parameter Positive Operator Valued Measurement (POVM), which maps a state onto mixed subspace of eigenvectors~\citep{Busch,Mal1}. For a qubit, it is characterized by effect operators as given by
\begin{equation}
E_{a|A} = \lambda ~\Pi_{a|A} + (1-\lambda) \frac{\openone_2}{2}
\end{equation}
corresponding to projective measurement $\Pi_{a|A}$ ($a\in\lbrace+,-\rbrace$ where $\lambda$ ($0\leq \lambda \leq 1$) is called the sharpness of measurement. The effect operators follow the completeness relation as $\sum_{a=+,-} E_{a|A}=\openone_2$. 

Depending upon Alice's unsharp measurement on her side of the joint state $\rho_{AB}$, the post-measurement state transforms as
\begin{equation}
\rho_{AB} \rightarrow \rho'_{AB} = \frac{(\sqrt{E_{a|A}}\otimes \openone_2)~\rho_{AB}~(\sqrt{E_{a|A}}\otimes \openone_2)}{p(a|A)}
\end{equation}
where the probability of getting outcomes $\pm$ from Alice's unsharp measurement is given by $p(a|A)=\operatorname{Tr}[(E_{a|A}\otimes \openone_2) \rho_{AB}]$. Hence the normalised conditional state at Bob's side becomes $\rho_{B|E_{a|A}} = \operatorname{Tr}_A[\rho'_{AB}]= \frac{\operatorname{Tr}_A[(\sqrt{E_{a|A}}\otimes \openone_2)~\rho_{AB}~(\sqrt{E_{a|A}}\otimes \openone_2)]}{\operatorname{Tr}[(\sqrt{E_{a|A}}\otimes \openone_2)~\rho_{AB}~(\sqrt{E_{a|A}}\otimes \openone_2)]}$. By using the above method for the initially chosen singlet state $|\psi^-\rangle = \frac{1}{\sqrt{2}}(|01\rangle - |10\rangle)$, we have $p(\pm|x)=p(\pm|y)=\frac{1}{2}$ and $\mathscr{I}_R^x(\rho_{B|E_{\pm|y}}) = \mathscr{I}_R^y(\rho_{B|E_{\pm|x}}) = \lambda$ for a given set of measurements. The left hand side of inequality(\ref{ISI1}) thus turns into
\begin{equation}
I_2(|\psi^-\rangle) = 2\lambda
\end{equation}
which exceeds $\sqrt{2}$ when $\lambda > \frac{1}{\sqrt{2}}$. The violation is maximal when there is no unsharpness in the measurement. 

We can observe that the conditional state, $\rho_{B|E_{a|A}}$ obtained at Bob's side after Alice's unsharp measurement on her part of the singlet state is equivalent to that derived using projective measurement done by Alice on her subsystem of the initially chosen Werner state with visibility $\lambda$, i.e., $\lambda |\psi^-\rangle\langle \psi^-| + \frac{1-\lambda}{4} \openone_2 \otimes \openone_2$. It hence follows that  the optimum range of sharpness parameter for which the quantum mechanical violation of ISI,  NAQC and NAQI occur respectively,  are indeed the same as those displayed in the Table \ref{tab1} (replacing $v$ by $\lambda$). Therefore, ISI turns out to be the most resilient against unsharp measurements done by Alice to obtain steerability  using incomplete  knowledge of Bob's local hidden state (LHS). 

\section{Conclusions}\label{C7}

Complex numbers are an essential part of the density matrix formalism in quantum mechanics. The role of imaginarity turns out to be evident in the manifestation of various features of quantum theory. The operational advantage of imaginary numbers in local state discrimination is well established~\citep{Wu1}. Since quantum mechanics allows entanglement and nonlocal features among two or more spatially separated parties, which distinguishes it from classical theory, it is imperative to investigate the role of imaginarity for displaying nonlocal features of entangled states. In the present work, we show how imaginarity forms an important resource in the efficient manifestation of quantum steerability.

In our present analysis we first construct 2-setting complementarity relations based on the basis dependent nature of imaginarity. We then present our imaginarity steering inequality (ISI) for bipartite qubit states which forms a sufficient criterion for detecting quantum steering. Next, we show that the set of all bipartite qubit states which satisfy ISI is both convex and compact. This enables formulation of witness operators which produce negative expectation value for the states  demonstrating steerability through the violation of ISI. We furnish examples of such witnesses for some well-known quantum states, and elaborate on the practical implementation of such witness operators. Additionally, we demonstrate the monogamy of the quantum correlations embodied by the violation of our imaginarity steering inequality. 

Before concluding, it may be worth re-emphasizing that our imaginarity steering inequality is a function of just four real parameters in the space of general bipartite qubit states. The framework of ISI thus emerges as an effective test-bed for detecting nonlocality with less complex empirical set-ups compared to those required for detecting Bell-nonlocality or general steerability. Since the imaginarity steering criterion proposed by us does not invoke full knowledge of the density matrix, general quantum steerability~\citep{Wiseman,Jones} may not always imply steerability of imaginarity. However, the reverse holds true. In this work our proposed ISI is further compared with two other steering criteria  based on the partial knowledge of the quantum state. We show that ISI outperforms such criteria, known as nonlocal advantage of quantum coherence~\citep{Mondal} and nonlocal advantage of quantum imaginarity~\citep{Wei2024} in the context of two different scenarios demonstrating the comparative robustness of ISI against  white noise and unsharp measurements respectively. 

The analysis present here may open up several interesting directions for further study. The possibility of quantification of steerability through witness operators~\citep{Brandao2005, Rau2009} could be investigated employing practically realizable fault tolerant techniques~\citep{Cai2023}.  The analysis of various information processing tasks such as one-sided device independent quantum key distribution~\citep{Pramanik} and random number generation~\citep{Passaro} may be simplified using our proposed framework. Since, all maximally entangled states violate the ISI maximally, it may be used for one-sided device-independent certification of maximally entangled states~\citep{Supic,Goswami} more efficiently. Moreover, the effects of steering imaginarity in accelerated frames~\citep{Mondal1} and in higher dimensional set-ups~\citep{Hu} may be a promising  avenue to explore. Further, the sharing of the imaginarity steering correlation among a number of observers~\citep{Mal1,Sasmal2018,Datta2018,Gupta2021} may be interesting to explore in future works.

\bibliography{Img_aug25_1}

\end{document}